\newcounter{mynumcounter}
	       {%
		 \begin{list}{(\roman{mynumcounter})\hspace*{\fill}}%
		   {
		     \setlength{\topsep}{0cm}
		     \setlength{\partopsep}{0cm}
		     \setlength{\itemsep}{0ex}
		     \setlength{\parsep}{0cm}
		     \setlength{\leftmargin}{0cm}
		     \setlength{\itemindent}{8mm}		   
		     \setlength{\labelsep}{3mm}
		     \setlength{\labelwidth}{5mm}
		     \usecounter{mynumcounter}
		   }%
	       }
	       {\end{list}}
\newcommand{\eps}{\varepsilon}
\newcommand{\del}{\partial}
\newcommand{\dd}[2]{\frac{\del #1}{\del #2}}
\newcommand{\ddeval}[3]{\left.\dd{#1}{#2}\right|_{#3}}
\newcommand{\tr}{\operatorname{tr}}
\renewcommand{\div}{\operatorname{div}}
\newcommand{\IR}{\mathbf{R}}
\newcommand{\CH}{\mathcal{H}}
\newcommand{\CT}{\mathcal{T}}
\newcommand{\CU}{\mathcal{U}}
\newcommand{\al}{\alpha}
\newcommand{\rmd}{\mathrm{d}}
\newcommand{\dist}{\mathrm{dist}}
\newcommand{\dmu}{\,\rmd\mu}
\newcommand{\lap}{\Delta}
\newcommand{\Laplacian}[1]{\smash{\sideset{^{#1}}{}{\mathop\lap\nolimits}}}
\newcommand{\lapSig}{\Laplacian{\Sigma}}
\newcommand{\Connection}[1]{\smash{\sideset{^{#1}}{}{\mathop\nabla\nolimits}}}
\newcommand{\nabSig}{\!\Connection{\Sigma}}
\newcommand{\nabL}{\!\Connection{L}}
\newcommand{\Divergence}[1]{\smash{\sideset{^{#1}}{}{\mathop\mathrm{div}\nolimits}}}
\newcommand{\divSig}{\!\Divergence{\Sigma}}
\newcommand{\Ricci}[1]{\smash{\sideset{^{#1}}{}{\mathop\mathrm{Rc}\nolimits}}}
\newcommand{\RicL}{\!\Ricci{L}}
\newcommand{\Scalarcurv}[1]{\smash{\sideset{^{#1}}{}{\mathop\mathrm{Sc}\nolimits}}}
\newcommand{\ScalM}{\!\Scalarcurv{M}}
\newcommand{\ScalSig}{\!\Scalarcurv{\Sigma}}
\newcommand{\Acircbar}%
{\hspace*{4pt}\raisebox{8.5pt}{\makebox[-4pt][l]{$\scriptstyle\circ$}}\bar A}
\newcommand{\Kcircbar}%
{\hspace*{4pt}\raisebox{8.5pt}{\makebox[-4pt][l]{$\scriptstyle\circ$}}\bar K}
\title{\titlefamily\Huge 
The time evolution of marginally trapped surfaces
}
\author{
  \authname{Lars Andersson$^{\star,\dagger}$}
  \and
  \authname{Marc Mars$^{\ddagger}$}
  \and
  \authname{Jan Metzger$^{\star}$}
  \and
  \authname{Walter Simon$^{\ddagger}$}
  \\[2mm]
  \authaddress{    
    $^\star$\ 
    Albert-Einstein-Institut,
    Am M\"uhlenberg 1,
    D-14476 Potsdam,
    Germany
  }
  \authaddress{
    $^\dagger$\ 
    Department of Mathematics,
    University of Miami,
    Coral Gables, FL 33124,
    USA
  }
  \authaddress{
    $^\ddagger$\ 
    Facultad de Ciencias,
    Universidad de Salamanca,\\
    Plaza de la Merced s/n,
    E-37008 Salamanca, Spain
  }
}
\date{}
\begin{document}
\hyphenation{}
\pagestyle{footnumber}
\maketitle
\thispagestyle{footnumber}
\begin{abst}%
  \blfootnote{Email:
    \begin{tabular}[t]{llll}
      LA & lars.andersson@aei.mpg.de, &
      MM & marc@usal.es,\\
      JM & jan.metzger@aei.mpg.de, &
      WS & walter@usal.es.
    \end{tabular}}
  In previous work we have shown the existence of a dynamical horizon
  or marginally trapped tube (MOTT) containing a given strictly stable
  marginally outer trapped surface (MOTS). In this paper we show some
  results on the global behavior of MOTTs assuming the null energy
  condition. In particular we show that MOTSs persist in the sense
  that every Cauchy surface in the future of a given Cauchy surface
  containing a MOTS also must contain a MOTS. We describe a situation
  where the evolving outermost MOTS must jump during the coalescence
  of two seperate MOTSs.  We furthermore characterize the behavior of
  MOTSs in the case that the principal
  eigenvalue vanishes under a genericity assumption. This leads to a
  regularity result for the tube of outermost MOTSs under the
  genericity assumption. This tube is then smooth up to finitely many
  jump times. Finally we
  discuss the relation of MOTSs to singularities of a space-time.
\end{abst}
\section{Introduction}
In previous work \cite{Andersson-Metzger:2005,Andersson-Metzger:2007},
we considered marginally trapped surfaces, or more specifically,
marginally outer trapped surfaces (MOTS). These were studied in the
context of initial data sets without regarding their time
dependence. In this note we shall consider the behavior of outermost
MOTSs in the context of Cauchy slicings. The main result in this
respect is that when a MOTS exists initially, it persists, provided
the developing spacetime satisfies the null energy
condition. Moreover, the domain bounded by these outermost MOTSs, in a
given Cauchy slice, contains the intersection of the causal future of
the initial one with the Cauchy slice. This is a generalization of the
fact, that in the smooth case a strictly stable MOTS gives rise to a
spacelike tube foliated by MOTSs in its vicinity
\cite{Andersson-Mars-Simon:2005,andersson-mars-simon:2007pub}. This is
the content of section~\ref{sec:evol-outerm-mots}. Then we discuss
some questions of regularity of the so defined family of MOTSs. To get
started, we show in section~\ref{sec:coalescence-mots} that sometimes
the outermost MOTS must jump. We consider the case where two bodies
with separate MOTSs surrounding them come close enough together. In
such a scenario the outermost MOTS jumps before the individual MOTSs
make contact. In section~\ref{sec:jump-targ-outerm} we analyze the
targets of such jumps under a genericity condition. We find that the
target of such a jump generically is part of a marginally outer
trapped tube which is tangent to the time slice at the jump time and
lies to the future of that slice. In section~\ref{sec:regularity} we
look at some global regularity properties of the family of outermost
MOTS. In particular we show that generically jumps are the only
singularities that can happen, and that they are discrete. Finally, in
section~\ref{sec:relat-outer-trapp} we conclude with a version of the
well-known singularity theorems which works for outer trapped
surfaces. The ideas for its proof are all present in the literature,
although we were not able to find the precise statement of the given
theorem. The closest references are probably in \cite{gannon:1976}
and \cite{totschnig:1994}.


%
\section{Preliminaries}
\label{sec:preliminaries}
We consider data sets for the Einstein equations. These are triples
$(M,g,K)$ where $M$ is a compact 3-manifold with boundary, $g$ a
Riemannian metric and $K$ a symmetric 2-tensor on $M$. We assume that
$\del M$ has two disconnected parts $\del M = \del^- M \cup
\del^+M$. We equip the \emph{inner} boundary $\del^-M$ with the normal
vector pointing into $M$ and the \emph{outer} boundary $\del^+M$ with
the normal vector pointing out of $M$.

Assume that $\Sigma\subset M$ is a surface in the interior of $M$ that
encloses a region $\Omega$ together with the outer boundary $\del^+
M$, that is $\del \Omega = \Sigma \cup \del^+ M$. If $\Sigma$ is
embedded, then this is equivalent to the condition that $\Sigma$ be
homologous to $\del^+M$ and we choose the \emph{outer normal} on
$\Sigma$ as the vector field pointing into $\Omega$, that is in
direction of the outer boundary. This vector is denoted by $\nu$
subsequently.

For a surface $\Sigma$ homologous to $\del^+M$, we define the
\emph{outgoing null expansion} as
\begin{equation*}
  \theta^+[\Sigma] = P + H
\end{equation*}
where $P = \tr K - K(\nu,\nu)$ and $H$ is the mean curvature of
$\Sigma$ with respect to the outer normal as defined above.

We say that a surface $\Sigma$ which is homologous to $\del^+ M$ is a
marginally outer trapped surface (MOTS) if
\begin{equation*}
  \theta^+[\Sigma] = 0.
\end{equation*}

We define an \emph{outermost MOTS} in $M$ to be an embedded MOTS
$\Sigma$ homologous to $\del^+M$ which bounds a region $\Omega$
together with $\del^+ M$, such that for any other such MOTS $\Sigma'$,
homologous to $\del^+M$, bounding $\Omega'$ together with $\del^+ M$,
it holds that if $\Omega' \subset \Omega$, then $\Sigma'=\Sigma$. This
is the global notion of being outermost also used
in~\cite{Andersson-Metzger:2007}.

Given a MOTS $\Sigma$ in $M$ we linearize the operator
$\theta^+$ near $\Sigma$ in the following way. Given a function $f$ on
$\Sigma$, define the surface $\Sigma_f$ as the image of the
parametrization
\begin{equation*}
  G^M_f :\Sigma \to M : x \mapsto \exp_x (f(x)\nu),
\end{equation*}
where $\nu$ is the outer normal to $\Sigma$ and $\exp$ the exponential
map of $M$. It is clear that if $f$ is smooth and $\eps$ is small
enough, $\Sigma_{\eps f}$ is a smooth embedded surface if $\Sigma$
is. The linearization of the operator $\theta^+$ at $f=0$ is then
given by the following linear, elliptic second order differential
operator
\begin{equation*}
  \begin{split}
    &\ddeval{}{\eps}{\eps=0} G^M_f\circ\theta^+[\Sigma_{\eps f}]
    = L_M f
    \\
    &\quad
    =  -\lapSig f + 2 S(\nabSig f) + f\big(\divSig S
    -\tfrac{1}{2}|\chi^+|^2 - |S|^2 + \tfrac{1}{2}\ScalSig - \mu - J(\nu) \big).
  \end{split}
\end{equation*}
Here $\lapSig$, $\nabSig$ and $\divSig$ are the Laplace-Beltrami
operator, the tangential gradient and the divergence along $\Sigma$,
$\chi^+ = A + K^{\Sigma}$ with $A$ the second fundamental form of
$\Sigma$ in $M$ and $K^\Sigma$ the tangential projection of $K$ to
$\Sigma$. Furthermore $S(\cdot) = K(\nu,\cdot)^T$, where $(\cdot)^T$
denotes orthogonal projection to $T\Sigma$. $\ScalSig$ is the scalar
curvature of $\Sigma$, $\mu = \tfrac{1}{2}\big( \ScalM - |K|^2 + (\tr
K)^2 \big)$, and $J = \div K - d(\tr K)$.

The operator $L_M$ has a unique eigenvalue $\lambda$ which minimizes
the real part in the spectrum of $L_M$. $\lambda$ is real, the
corresponding eigenspace is one-dimensional and the non-zero functions
in this eigenspace have a sign. $\lambda$ is called the principal
eigenvalue of $L_M$. We say that a MOTS $\Sigma$ is \emph{stable} if
$\lambda \geq 0$ and \emph{strictly stable} if $\lambda>0$. When
referring to the principal eigenvalue of a MOTS subsequently we always
mean the principal eigenvalue of $L_M$ on that MOTS. Further details
can be found
in~\cite{Andersson-Mars-Simon:2005,Andersson-Metzger:2005,andersson-mars-simon:2007pub}.

From \cite[Section 7]{Andersson-Metzger:2007}, we recall the following
notion. For embedded surfaces $\Sigma$ homologous to $\del ^+ M$
bounding a region $\Omega$ together with $\del^+ M$, we say that the
interior set $U := M\setminus \Omega$ is called \emph{weakly outer
  trapped set} if $\theta^+[\Sigma] \leq 0$. The \emph{weakly outer
  trapped region} $\CT$ of $M$ is the union of all weakly outer
trapped sets in $M$:
\begin{equation*}
  \CT := \bigcup\, \big\{\Omega : \Omega\ \text{is weakly outer trapped}\big\}.
\end{equation*}
For brevity, we will call $\CT$ the \emph{trapped region}. Under the
above assumptions, if $\del^- M$ is non-empty and has
$\theta^+[\del^-M] < 0$, the trapped region $\CT$ will also be
non-empty and include a neighborhood of $\del^- M$. Thus it makes
sense to define the outer boundary of $\CT$ as
\begin{equation*}
  \del^+ \CT = \del \CT \setminus \del^- M.
\end{equation*}
The following theorem was proved in \cite[Theorem
7.3]{Andersson-Metzger:2007}, see also \cite{andersson:2009ere}.
\begin{theorem}
  \label{thm:trapped-region}
  Let $(M,g,K)$ be as described above with $\theta^+[\del^-M] <0$ and
  $\theta^+[\del^+ M] >0$. Then the outer boundary $\del^+\CT$ of the
  trapped region is a smooth, stable, embedded MOTS.
\end{theorem}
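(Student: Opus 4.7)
My plan is in three parts: construct a smooth limiting surface at $\partial^+\CT$, verify it satisfies $\theta^+ = 0$, and deduce stability from its outermost character.

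First, I would realize $\partial^+\CT$ as a geometric limit of smooth surfaces. The two barriers provided by the assumptions play a critical role: $\theta^+[\del^- M] < 0$ ensures $\CT$ is nontrivial and contains a neighborhood of $\del^- M$, while $\theta^+[\del^+ M] > 0$ prevents weakly outer trapped regions from reaching $\del^+ M$. Taking a nested exhaustion $\Omega_n \subset \Omega_{n+1}$ with $\bigcup_n \Omega_n = \CT$ by smooth weakly outer trapped regions, with boundaries $\Sigma_n$ satisfying $\theta^+[\Sigma_n] \leq 0$, I would extract a smooth embedded limit $\Sigma = \partial^+\CT$. Compactness would come from the a priori area and curvature estimates developed for MOTSs in the authors' earlier work, or equivalently from the Jang-equation / Plateau-type existence theory (Schoen--Yau, Eichmair, Andersson--Metzger); these succeed in dimension three, where the regularity theory delivers smooth embedded limits.

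Second, I would show $\theta^+[\Sigma] = 0$ and that $\Sigma$ is stable, both by maximality of $\CT$. The approximation yields $\theta^+[\Sigma] \leq 0$ in the limit. If $\theta^+[\Sigma] < 0$ somewhere, a small outward deformation by an appropriate nonnegative function would still satisfy $\theta^+ \leq 0$, producing a weakly outer trapped region strictly containing $\CT$, contradicting maximality; hence $\theta^+[\Sigma] \equiv 0$. For stability, suppose the principal eigenvalue $\lambda$ of $L_M$ on $\Sigma$ is negative, with positive principal eigenfunction $\phi$. Then for small $\eps > 0$ the surface $\Sigma_{\eps\phi}$ lies strictly outside $\Sigma$, and the linearization of $\theta^+$ already recalled in the excerpt gives $\theta^+[\Sigma_{\eps\phi}] = \eps \lambda \phi + O(\eps^2) < 0$, so $\Sigma_{\eps\phi}$ bounds a weakly outer trapped region strictly containing $\CT$, again a contradiction. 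Thus $\lambda \geq 0$.

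The principal obstacle is the regularity step. The set-theoretic definition of $\CT$ as the union of all weakly outer trapped sets does not immediately confer smoothness on $\partial^+\CT$, and the substantive part of the proof lies in the geometric analysis that upgrades the pointwise limit of the approximants $\Sigma_n$ to a smooth embedded surface, including uniform curvature bounds and control of possible degenerations. Once smoothness is in hand, the MOTS property and stability reduce to the perturbation arguments above together with the explicit linearization formula for $\theta^+$.
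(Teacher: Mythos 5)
The paper does not actually prove this theorem; it imports it verbatim, citing \cite[Theorem 7.3]{Andersson-Metzger:2007}, so the comparison is against the proof in that reference.

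Your broad outline (exhaust $\CT$ by weakly outer trapped sets, pass to a limit, then argue $\theta^+ = 0$ and $\lambda \geq 0$ by maximality) is the right skeleton, and your stability argument via outward deformation by the principal eigenfunction is essentially what is done. But there is a genuine gap in the compactness step that the cited proof fills with a tool you omit. The curvature bounds, injectivity-radius bounds, and the compactness theorem you invoke (from the Andersson--Metzger work) hold for \emph{stable MOTSs}, not for arbitrary surfaces with $\theta^+ \leq 0$. The boundaries $\Sigma_n$ of your exhausting weakly outer trapped sets satisfy only $\theta^+[\Sigma_n] \leq 0$ and admit no useful a priori estimates; they can oscillate wildly and degenerate, so no convergent subsequence is available from them directly. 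The missing ingredient is the existence theorem for MOTSs between barriers (Schoen's blow-up argument, or the Jang-equation approach of Schoen--Yau and Eichmair as implemented in \cite{Andersson-Metzger:2007}): using $\Sigma_n$ as the inner barrier ($\theta^+ \leq 0$) and $\del^+ M$ as the outer barrier ($\theta^+ > 0$), one produces a \emph{stable} MOTS $\tilde\Sigma_n$ enclosing $\Omega_n$. The compactness theorem then applies to the $\tilde\Sigma_n$, yielding a smooth stable embedded limit $\Sigma$, and one shows $\Sigma = \del^+ \CT$ by the maximality argument. Embeddedness of the limit is itself nontrivial and rests on the surgery argument of Section 6 of the same reference, which rules out the limit touching itself from the outside; you flag this but treat it as a side issue, whereas it is a substantive part of the proof.

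A second, smaller objection: your argument that $\theta^+[\Sigma] \equiv 0$ by a ``small outward deformation by an appropriate nonnegative function'' is not rigorous as stated. If $\theta^+[\Sigma] \leq 0$ with strict inequality somewhere, finding a nonnegative $f$ with $\theta^+[\Sigma_{\eps f}] \leq 0$ for small $\eps$ requires controlling $L_M f$ against the zeroth-order term $\theta^+[\Sigma]$, and the solvability of the relevant elliptic inequality is not free; it is tangled up with the sign of the principal eigenvalue, which is what you are trying to establish. Once you route the argument through the MOTS-between-barriers existence theorem, this step is unnecessary: the limit of the stable MOTSs $\tilde\Sigma_n$ is itself a MOTS, so $\theta^+ = 0$ comes for free.
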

Further properties of outermost MOTS derived in
\cite{Andersson-Metzger:2007} include the following estimates.
\begin{theorem}
  \label{thm:estimates}
  Assume that $(M,g,K)$ has $\theta^+[\del^-M] <0$ if $\del^- M$ is
  non-empty and $\theta^+[\del^+ M]>0$. Then there exist constants $C$
  and $\delta>0$ depending only on the geometry of $(M,g,K)$ with the
  following property.

  If $\Sigma$ is an outermost MOTS homologous to $\del^+M$ in
  $(M,g,K)$ then
  \begin{equation*}
    |A| \leq C
    \qquad\text{and}\qquad
    i^+(\Sigma) \geq \delta.
  \end{equation*}  
  Here $|A|$ is the norm of the second fundamental form of $\Sigma$ in
  $M$ and $2i^+(\Sigma)$ is the minimum length that a geodesic
  starting on $\Sigma$ in direction of the outer normal must travel
  before it can meet $\Sigma$ a second time.
\end{theorem}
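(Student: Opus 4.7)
The plan is to prove both estimates by contradiction combined with rescaling, using Theorem~\ref{thm:trapped-region} and the outermost/stability property to rule out degenerate blow-up limits.

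For the bound $|A| \le C$, suppose instead that one can find outermost MOTS $\Sigma_k$ in data with uniformly controlled geometry and points $p_k \in \Sigma_k$ with $|A|(p_k) = \lambda_k \to \infty$. Passing to geodesic normal coordinates at $p_k$ and rescaling the ambient metric by $\lambda_k^2$, the surfaces $\tilde\Sigma_k$ satisfy $|\tilde A|(p_k)=1$ and $|\tilde A|\le 1$ on a ball of radius $\lambda_k \to \infty$; the rescaled $\tilde K_k = \lambda_k^{-1} K_k$ tends to zero, and the rescaled ambient geometry converges to flat $\IR^3$. Since $\theta^+ = H + P = 0$ on $\Sigma_k$ and $P$ is tangential to $K$, a smooth subsequential limit $\Sigma_\infty \subset \IR^3$ is a complete properly embedded minimal surface through the origin with $|A|(0)=1$. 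The outermost property forces each $\Sigma_k$ to be stable, i.e.\ the principal eigenvalue of $L_M$ is $\ge 0$; under the rescaling all the potential-type terms $|\chi^+|^2 + |S|^2 + \mu + J(\nu)$ appearing in $L_M$ scale by at most $\lambda_k^{-1}$, so stability passes to the limit and $\Sigma_\infty$ is a stable complete minimal surface in $\IR^3$. By the Fischer--Colbrie--Schoen theorem such a surface must be a plane, contradicting $|A|(0)=1$.

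For the lower bound on $i^+(\Sigma)$, assume $i^+(\Sigma_k)\to 0$. Then there exist pairs $p_k,q_k\in\Sigma_k$ with $q_k = \exp_{p_k}(2 i^+(\Sigma_k)\nu_k)$ and $2 i^+(\Sigma_k)\to 0$. Thanks to the uniform $|A|$-bound from the first part, one can graphically parametrize $\Sigma_k$ near both $p_k$ and $q_k$ over a common tangent plane, and after a rescaling so that the two graphs sit at a bounded relative distance, extract a limit consisting of two tangent (possibly coincident) complete minimal sheets in $\IR^3$ with a common tangent plane at the origin. The outermost property again fixes the relative position: the upper sheet must lie weakly above the lower one (otherwise one could produce a weakly outer trapped region strictly exterior to $\Sigma_k$ by combining pieces of the two sheets), so the strong maximum principle for the minimal surface operator forces them to coincide. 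This contradicts the fact that the short normal geodesic joins them on the original $\Sigma_k$ in the sense demanded by the definition of $i^+$.

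The main obstacle in both steps is to transfer the outermost property through the rescaling limit: one needs to verify that the stability condition $\lambda \ge 0$ for $L_M$ truly survives as stability of the limit minimal surface, and, in the injectivity-radius argument, that the correct one-sided ordering of the two colliding sheets is preserved so that the scalar maximum principle applies. The boundary hypotheses $\theta^+[\del^-M]<0$ and $\theta^+[\del^+M]>0$ enter through a barrier argument which guarantees that the blow-up basepoints $p_k$ remain at a definite distance from $\del M$, so that the limit lives in all of $\IR^3$ rather than in a half-space with boundary.
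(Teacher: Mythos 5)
This theorem is not proved in the present paper at all --- it is a recalled result from \cite{Andersson-Metzger:2007}, so there is no ``paper's own proof'' to match against, only the argument in that reference. Your blow-up strategy is indeed the method used there for the curvature estimate, but the way you write it has two real gaps. First, the scaling claim ``the potential-type terms $|\chi^+|^2 + |S|^2 + \mu + J(\nu)$ scale by at most $\lambda_k^{-1}$'' is wrong for the dominant term: $|\chi^+|^2$ contains $|A|^2$, and after the rescaling by $\lambda_k$ the rescaled $|\tilde\chi^+|^2$ stays of order one near the base point (only the $K$-, $\mu$-, $J$- and $S$-contributions vanish). Second, and more importantly, ``stability passes to the limit'' is not a one-line deduction because $L_M$ is \emph{not} self-adjoint --- it has the drift term $2S(\nabla f)$. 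Non-negativity of its principal eigenvalue does not directly give an integral stability inequality that survives the blow-up; you need the Galloway--Schoen symmetrization: substitute $f = e^u$ in $L_M f = \lambda f$, integrate against $\psi^2$, complete the square to absorb the drift into $|\nabla\psi + \psi S|^2$, and integrate by parts to trade $\div S$ against $|S|^2$. Only then does one obtain the symmetric inequality $\int |\nabla\psi|^2 \geq \int\bigl(\tfrac12|\chi^+|^2 - \tfrac12\ScalSig + \mu + J(\nu)\bigr)\psi^2$, which after the rescaling (using the Gauss equation $\ScalSig = -|\tilde A|^2$ for the limit minimal surface) becomes exactly full minimal-surface stability, allowing Fischer--Colbrie--Schoen. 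Without this step your argument would not close.

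Your argument for $i^+(\Sigma)\geq\delta$ is also not the one in \cite{Andersson-Metzger:2007}, and as written it has a gap: a short \emph{outward} normal geodesic returning to $\Sigma$ need not produce, after rescaling, two tangent sheets with a common tangent plane. The two hit points may approach each other with non-parallel normals, so there is nothing for the scalar strong maximum principle to act on, and the ordering you invoke is not forced by the definition of $i^+$ alone. The cited reference instead proves this bound by the neck-surgery construction that this paper itself invokes in the proof of Theorem~\ref{thm:coalescence}: if $\Sigma$ came too close to itself along the outward normal, one could insert a small neck to build a weakly outer trapped surface strictly outside $\Sigma$, contradicting the outermost property. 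That surgery argument uses the precise sign of $\theta^+$ along the neck rather than a blow-up limit and is the robust route here. Finally, your remark that the boundary conditions $\theta^+[\del^- M]<0$, $\theta^+[\del^+ M]>0$ keep the blow-up base points away from $\del M$ is correct and is indeed the role they play.
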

Finally, we introduce some notation. Assume that $(L,h)$ is a
Lorentzian spacetime manifold with boundary, foliated by spacelike
slices
\begin{equation*}
  L = M\times I
\end{equation*}
where $ I\subset \IR$ is some interval and $M$ a three dimensional
manifold as above.  We choose the time orientation on $L$ so that
$t\in I$ increases to the future. We denote $M_t : = M\times\{t\}$ for
$t\in I$ and let $(g_t,K_t)$ be the first and second fundamental form
of $M_t$ in $(L,h)$. The lapse function along $M_t$ is denoted by
$\alpha_t := |\nabL t|^{-1}$. The $K_t$ and $\alpha_t$ are computed
with respect to the future directed unit normal. We will always assume
without further notice that $h$, $g_t$ and $\al_t$ are smooth on all
of $L$ up to the boundary, and furthermore that $\del M \times I$ is
also smooth. For sections~\ref{sec:evol-outerm-mots}
and~\ref{sec:coalescence-mots} we actually need only $C^2$ and in
section~\ref{sec:jump-targ-outerm} and~\ref{sec:regularity} we need
$C^{2,\al}$ regularity.

In this setting a marginally outer trapped tube (MOTT) adapted to
$M_t$ is a smooth three dimensional manifold $\CH$ such that
$\Sigma_t:=\CH \cap M_t$ is a smooth, two-dimensional, embedded MOTS
in $M_t$. Later we will also consider tubes where the $\Sigma_t$
are only immersed.


%
\section{Evolution of outermost MOTSs}
\label{sec:evol-outerm-mots}
In this section we discuss the evolution of the outermost MOTSs in a
Lo\-ren\-tzian spacetime $(L,h)$ as described in
section~\ref{sec:preliminaries}. Assume that $L$ satisfies the null
energy condition (NEC), that is assume that
\begin{equation*}
  \RicL(l,l) \geq 0
\end{equation*}
for all null vectors $l$, where $\RicL$ denotes the Ricci-tensor of
$L$.

We restrict our attention to compact slices with boundary. In
particular, as described in section~\ref{sec:preliminaries}, assume
that $\del M$ has two disconnected parts $\del M = \del^- M \cup
\del^+M$. Note that neither $\del^-M$ nor $\del^+ M$ is assumed to be
connected. We always assume that $\del^+ M$ be non-empty but allow in
certain cases the $\del^- M = \emptyset$. We will subsequently write
$\del M_t$ ($\del^-M_t, \del^+M_t$) to denote $\del M \times \{t\}$
($\del^- M \times \{t\},\del^+ M \times \{t\}$). We assume that
$\theta^+[\del^-M_t] < 0$ whenever $\del^- M_t$ is non-empty and
$\theta^+[\del^+ M_t] >0 $ with respect to the data $(g_t,K_t)$ for
all $t\in[0,T]$. The assumption that $\del^+M$ is non-empty implies
the existence of some outer untrapped surface.

Note that here we do not consider the smooth evolution of MOTSs which
is based on the inverse function theorem and relies on strict
stability as in
\cite{Andersson-Mars-Simon:2005,andersson-mars-simon:2007pub}. The goal
is to formalize a sketch to construct a MOTT given in the previous two
references.

Our main result in this setting is the following.
\begin{theorem}
  \label{thm:mots-evolution}
  Let $L = M \times [0,T]$ be a spacetime satisfying the NEC such that
  $\del M_t = \del ^+ M_t$ with $\theta^+[\del^+ M_t] >0$ or that
  $\del M_t = \del^- M_t \cup \del ^+ M_t$ with $\theta^+[\del^- M_t]
  <0$ and $\theta^+[\del^+ M_t] >0$.

  Assume that the trapped region $\CT_0$ in $(M_0,g_0,K_0)$ is
  non-empty. Then for all $t\in [0,T]$ the trapped region $\CT_t$ of
  $(M_t,g_t,K_t)$ is also non-empty.

  Furthermore, if $J^+(\CT_0)$ denotes the causal future of $\CT_0$ in
  $L$, then we have
  \begin{equation*}
    J^+(\CT_0) \cap M_t \subset \CT_t.
  \end{equation*}
  If this inclusion is not strict at time $\tau>0$, then $J^+(\CT_0)
  \cap M_t = \CT_t$ for all $t\in[0,\tau]$ and $\del^+\CT_t$ satisfies
  $\chi^+ \equiv 0$ and $\RicL(l^+,l^+) = 0$.
\end{theorem}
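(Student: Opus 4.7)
The plan is to use the null hypersurface $\CN$ generated by the future-directed outgoing null geodesics from $\del^+\CT_0$ with initial tangent $l^+$ as a barrier that tracks the forward evolution of the trapped region. Under the NEC, the Raychaudhuri equation for the expansion $\theta$ of these generators reads
\begin{equation*}
  \DD{\theta}{\lambda} = -\tfrac12\theta^2 - |\sigma|^2 - \RicL(l^+,l^+) \leq 0,
\end{equation*}
and since $\theta|_{\del^+\CT_0} = \theta^+[\del^+\CT_0] = 0$, the expansion is non-positive on $\CN$ wherever it remains smooth. Standard causal theory identifies $\del J^+(\CT_0)$, in the interior of $L$, with the achronal part of $\CN$ up to the first focal point or crossing of generators.

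To extract the inclusion $J^+(\CT_0)\cap M_t \subset \CT_t$ I would use $\del J^+(\CT_0)\cap M_t$ as an inner barrier for the outermost MOTS problem in $M_t$. At smooth points, transversality of $\CN$ to $M_t$ together with $\theta\leq 0$ on $\CN$ translates into $\theta^+\leq 0$ on the intersection; at caustics or generator crossings the Lipschitz ``corner'' points inward, so $\theta^+\leq 0$ holds in a one-sided barrier sense. Feeding this barrier into the trapped-region machinery of \cite{Andersson-Metzger:2007}---or, equivalently, approximating from outside by smooth surfaces with strict $\theta^+<0$ obtained by shifting $\del^+\CT_0$ slightly outward in $M_0$ and transporting along the corresponding null generators via the strict Raychaudhuri inequality---yields non-emptiness of $\CT_t$ together with the desired inclusion. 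Theorem~\ref{thm:trapped-region} then produces the smooth stable outermost MOTS $\del^+\CT_t$.

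For the rigidity clause, suppose the inclusion is an equality at some $\tau>0$. Then $\del^+\CT_\tau\subset\del J^+(\CT_0)\cap M_\tau$, and each point of $\del^+\CT_\tau$ is reached by a generator of $\CN$ leaving $\del^+\CT_0$, with no focal point or generator crossing before $t=\tau$ (any such singularity would drive $\theta$ to $-\infty$, contradicting the smoothness of $\del^+\CT_\tau$). Along every such generator $\theta$ vanishes at both endpoints, so integrating the Raychaudhuri equation under NEC forces $\theta\equiv 0$, $|\sigma|^2\equiv 0$ and $\RicL(l^+,l^+)\equiv 0$ on the entire null segment joining $\del^+\CT_0$ to $\del^+\CT_\tau$. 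The cross-sections $\CN\cap M_t$ of this shear-free, expansion-free null strip are smooth MOTSs $\Sigma_t'$ enclosing $J^+(\CT_0)\cap M_t$, and $\Sigma_t'=\del^+\CT_t$ for every $t\in[0,\tau]$ follows by applying the forward inclusion to the sub-interval $[t,\tau]$: any strict enlargement $\CT_t\supsetneq J^+(\CT_0)\cap M_t$ would push $\CT_\tau$ strictly beyond $J^+(\CT_0)\cap M_\tau$ by propagating the extra weakly trapped region forward along its own outgoing null hypersurface, contradicting equality at $\tau$. The vanishing of expansion and shear of $\CN$ is precisely $\chi^+\equiv 0$ on each $\del^+\CT_t$, and $\RicL(l^+,l^+)=0$ has been established along the entire null strip.

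The main technical obstacle I expect is the treatment of caustics and self-intersections of $\CN$ when using $\del J^+(\CT_0)\cap M_t$ as an inner barrier: the inequality $\theta^+\leq 0$ is only one-sided at the singular locus, and one must either adapt the barrier formalism of \cite{Andersson-Metzger:2007} to Lipschitz barriers or construct smooth approximations from outside while preserving the strict Raychaudhuri inequality. Compared to this, the Raychaudhuri rigidity itself and the backward propagation of equality in the rigidity clause are comparatively routine.
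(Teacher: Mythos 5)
Your core idea --- propagate $\del^+\CT_0$ along the outgoing null congruence and use Raychaudhuri under NEC to conclude $\theta^+\leq 0$ on the intersection with later slices, yielding both the inclusion $J^+(\CT_0)\cap M_t\subset\CT_t$ and the rigidity clause when equality holds --- is exactly the idea behind the paper's proof. The difference is where the technical weight is placed, and you have put it in a significantly harder spot than necessary.

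You run the null hypersurface $\CN$ (or rather $\del J^+(\CT_0)$) all the way from $t=0$ to $t$, and correctly identify that over a macroscopic time interval it develops caustics and self-intersections, forcing you to work with a Lipschitz barrier and a one-sided notion of $\theta^+\leq 0$, or to build an exterior smooth approximation. You flag this as the main obstacle. The paper's proof circumvents it entirely by iterating in short time steps: it applies the null barrier only over intervals $[\tau,\tau+\delta]$, and then \emph{restarts from the new outermost MOTS} $\del^+\CT_{\tau+\delta}$. The crucial ingredient that makes this work is Theorem~\ref{thm:estimates}: the uniform bound on $|A|$ and the uniform lower bound on the outward injectivity radius $i^+(\Sigma)$ for outermost MOTSs, uniform in $\tau$ because the spacetime metric is smooth on the compact $M\times[0,T]$. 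These estimates guarantee a $\delta>0$, independent of $\tau$, so that the null hypersurface $\Gamma^+_{\tau,\cdot}$ emanating from $\del^+\CT_\tau$ stays smooth, embedded, and homologous to $\del^+M_t$ throughout $[\tau,\tau+\delta]$ --- no caustics, no corners, no Lipschitz issues. On each short interval the barrier comparison is the clean statement you want, and because $\Gamma^+_{\tau,t}$ encloses $J^+(\CT_\tau)\cap M_t$, one gets $J^+(\CT_\tau)\cap M_t\subset\CT_t$ with strict inclusion unless $|\chi^+|^2+\RicL(l^+,l^+)\equiv 0$; the global statement follows by concatenating finitely many such steps. So what your approach buys, if pushed through, is a global barrier picture that directly exhibits $\del J^+(\CT_0)$ as an inner barrier; what the paper's approach buys is that you never have to leave the smooth setting, at the price of invoking the compactness-based curvature and injectivity estimates. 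For the rigidity clause your strategy matches the paper's, though your backward-propagation step (``a strict enlargement at time $t$ would push $\CT_\tau$ strictly beyond $J^+(\CT_0)\cap M_\tau$'') is not automatic from set inclusions alone and really rests on the same iterated strict-Raychaudhuri argument; stating it as a consequence of the step-by-step strictness would make it tight.
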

The interpretation of this theorem is that if there exists an initial
MOTS in $M_0$, then at all later times there also exists a MOTS in
$M_t$ that encloses the points in $M_t$ which are in the causal future
of the trapped region of $M_0$, and thus the terminology \emph{trapped
  region} is indeed justified. If the $\del^+ \CT_t$ form a smooth MOTT,
then this means in particular that this MOTT is achronal.
\begin{proof}
  The proof is based on the Raychaudhuri equation. Denote by
  $\Sigma_\tau:=\del^+ \CT_\tau$ the outermost MOTS in $M_\tau$. Let
  $\Gamma_\tau^+$ denote the null-surface generated by the outgoing
  null normal $l^+$ on $\Sigma_\tau$ and by $\Gamma^+_{\tau,t}$ the
  intersection of $\Gamma_\tau^+$ with $M_t$.

  Since $L$ and all $\al_t$ are smooth, the constants from
  theorem~\ref{thm:estimates} are uniform in $\tau$. This implies that
  there exists a $\delta>0$ depending only on the geometry of $L$ and
  not on the particular $\tau$ such that the surface
  $\Gamma^+_{\tau,t}$ is embedded and homologous to $\del^+ M_t$ for
  all $t\in[\tau,\tau+\delta]$.

  By the Raychaudhuri equation and the null energy condition we know
  that
  \begin{equation*}
    \theta^+(\Gamma^+_{\tau,t}) = -\al_t (|\chi^+|^2 + \RicL(l^+,l^+)) \leq 0
  \end{equation*}
  and thus $\Gamma^+_{\tau,t}$ is contained in the trapped region
  $\CT_t$ as claimed. Note that the inclusion is strict unless
  $|\chi^+|^2 + \RicL(l^+,l^+)= 0$. Since $\Gamma^+_{\tau,t}$ encloses the
  causal future of the region enclosed in $\Sigma_\tau$, we also find
  the inclusion $J^+(\CT_\tau)\subset \CT_t$ for all
  $t\in[\tau,\tau+\delta]$.

  Hence we can start with $\tau=0$ show that the claim holds up to
  time $\delta$ and then restart at time $\delta$ and iterate the
  argument.
\end{proof}
Hence the $\CH_t := \del^+\CT_t$ are non-empty for $t\in[0,T]$ and we
can consider the set
\begin{equation*}
  \CH := \bigcup_{t\in [0,T]} \CH_t \subset M \times [0,T].
\end{equation*}
We will make some remarks about the regularity of this set in
section~\ref{sec:regularity}.


%
\section{Coalescence of MOTSs}
\label{sec:coalescence-mots}
This section is devoted to an informal description of the coalescence
of separate components of the outermost MOTSs into one during the time
evolution. The result here is that if two separate MOTS approach, and
come close enough, then the outermost MOTS must jump before the two
pieces make contact. We work in the same setting as before.

Assume that we have an initial data set with two separate MOTSs
$\Sigma_0^1$ and $\Sigma_0^2$. If these two MOTSs evolve to MOTSs
$\Sigma_t^i$, with $i=1,2$ in such a way that
\begin{equation}
  \dist(\Sigma_t^1, \Sigma_t^2) \to 0
\end{equation}
as $t$ approaches some time $T$, then it has been observed in
numerical simulations that a common MOTS enclosing both $\Sigma_t^1$
and $\Sigma_t^2$ appears before they actually make contact. Here we
show that this has to be the case in general.
\begin{theorem}
  \label{thm:coalescence}
  Let $(M_t,g_t,K_t)$ for all $t\in[0,T]$ be a smooth family of
  initial data sets such that $\del M_t$ splits into disconnected
  parts $\del M_t = \del^- M_t \cup \del^+ M_t$ with
  $\theta^+[\del^-M_t]<0$ and $\theta^+[\del^+M_t]>0$.

  Suppose that for all $t\in[0,T]$ there exists a MOTS $\Sigma_t$
  homologous to $\del^+ M_t$ such that $\Sigma_t$ has at least two
  components $\Sigma_t^1$ and $\Sigma_t^2$ with the property that
  \begin{equation*}
    \dist(\Sigma_t^1, \Sigma_t^2) \to 0 \quad\text{as}\quad t\to T.
  \end{equation*}
  Then there exists a $\tau\in[0,T)$ such that the trapped region
  $\CT_\tau$ of $(M_\tau,g_\tau, K_\tau)$ has one connected component
  which contains both $\Sigma_\tau^1$ and $\Sigma_\tau^2$.
\end{theorem}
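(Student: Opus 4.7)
The plan is to argue by contradiction and violate the uniform injectivity-radius estimate of Theorem~\ref{thm:estimates}. Assume no such $\tau$ exists; then for every $\tau\in[0,T)$ the MOTSs $\Sigma_\tau^1$ and $\Sigma_\tau^2$ lie in distinct connected components $\CT_\tau^1,\CT_\tau^2$ of the trapped region $\CT_\tau$, with outer boundaries $\Gamma_\tau^i:=\del\CT_\tau^i\setminus\del^-M_\tau$. Theorem~\ref{thm:trapped-region} applied to $(M_\tau,g_\tau,K_\tau)$ says that $\del^+\CT_\tau$ is a smooth, stable, embedded outermost MOTS homologous to $\del^+M_\tau$; in our situation it contains $\Gamma_\tau^1$ and $\Gamma_\tau^2$ as two disjoint smooth components. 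Because $(M_t,g_t,K_t)$ varies smoothly on the compact interval $[0,T]$, I would argue that the constants $C,\delta>0$ in Theorem~\ref{thm:estimates} can be chosen uniformly in $\tau$, yielding
\[
 |A_{\del^+\CT_\tau}|\leq C
 \qquad\text{and}\qquad
 i^+(\del^+\CT_\tau)\geq\delta
\]
for all $\tau\in[0,T)$.

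The second step is to convert the assumption $\dist(\Sigma_\tau^1,\Sigma_\tau^2)\to 0$ into a bound between the two components of $\del^+\CT_\tau$. Since the region $U_\tau^i$ enclosed by $\Sigma_\tau^i$ is weakly outer trapped, it lies in $\CT_\tau$ and hence in the unique component containing $\Sigma_\tau^i$, namely $\CT_\tau^i$; in particular $\Sigma_\tau^i\subset\CT_\tau^i$. Pick $p_\tau\in\Sigma_\tau^1$, $q_\tau\in\Sigma_\tau^2$ realizing $\dist(\Sigma_\tau^1,\Sigma_\tau^2)$ and let $\gamma_\tau$ be a minimizing geodesic joining them. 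Since $\CT_\tau^1$ and $\CT_\tau^2$ are disjoint components and $\del^-M_\tau$ is a physical boundary of $M_\tau$, the curve $\gamma_\tau$ must leave $\CT_\tau^1$ through a point $x_\tau\in\Gamma_\tau^1$ and subsequently enter $\CT_\tau^2$ through a point $y_\tau\in\Gamma_\tau^2$. The subarc from $x_\tau$ to $y_\tau$ has length at most $\dist(\Sigma_\tau^1,\Sigma_\tau^2)$, and therefore
\[
 \dist(\Gamma_\tau^1,\Gamma_\tau^2)\leq\dist(\Sigma_\tau^1,\Sigma_\tau^2)\longrightarrow 0.
\]

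To conclude, take $\tilde x_\tau\in\Gamma_\tau^1$ and $\tilde y_\tau\in\Gamma_\tau^2$ realizing $\dist(\Gamma_\tau^1,\Gamma_\tau^2)$, and let $\tilde\gamma_\tau$ be a minimizing geodesic between them. As a minimizer between two disjoint closed submanifolds, $\tilde\gamma_\tau$ is perpendicular to $\Gamma_\tau^1$ at $\tilde x_\tau$; and since $\Gamma_\tau^2$ lies outside $\CT_\tau^1$, its initial direction is the \emph{outer} unit normal to $\Gamma_\tau^1\subset\del^+\CT_\tau$. Travelling along $\tilde\gamma_\tau$ for the length $\dist(\Gamma_\tau^1,\Gamma_\tau^2)$ brings us back to $\del^+\CT_\tau$ at $\tilde y_\tau$, which is distinct from $\tilde x_\tau$ because $\Gamma_\tau^1\cap\Gamma_\tau^2=\emptyset$. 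By the definition of $i^+$ in Theorem~\ref{thm:estimates} this forces $2\,i^+(\del^+\CT_\tau)\leq\dist(\Gamma_\tau^1,\Gamma_\tau^2)\to 0$, contradicting the uniform lower bound $i^+(\del^+\CT_\tau)\geq\delta$. A $\tau\in[0,T)$ of the desired kind must therefore exist.

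The main obstacle I foresee is actually the uniformity claim in the first paragraph: Theorem~\ref{thm:estimates} is phrased for a single triple $(M,g,K)$, so one has to trace through its proof (or invoke a standard continuous-dependence argument) to verify that $C$ and $\delta$ depend only on a finite list of geometric quantities of $(M_t,g_t,K_t)$ which remain uniformly controlled for $t\in[0,T]$ by the standing smoothness and compactness assumptions. Once this uniformity is granted, the geometric contradiction above is essentially immediate.
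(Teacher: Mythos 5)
Your proof is correct but takes a genuinely different route from the paper's. The paper applies the surgery procedure of \cite[Section 6]{Andersson-Metzger:2007} directly: it inserts a small neck around the geodesic $\gamma_t$ joining $\Sigma_t^1$ and $\Sigma_t^2$, producing a weakly outer trapped surface $\Sigma'_t$ whose interior set contains a connected neighborhood of $\Sigma_t^1\cup\gamma_t\cup\Sigma_t^2$; this explicitly exhibits the desired connected component of $\CT_t$. You instead argue by contradiction from the outward injectivity radius bound $i^+\geq\delta$ in Theorem~\ref{thm:estimates}: if $\Sigma_\tau^1,\Sigma_\tau^2$ stayed in separate components of $\CT_\tau$, the two corresponding components of $\del^+\CT_\tau$ would come arbitrarily close, yielding arbitrarily short outward geodesics from $\del^+\CT_\tau$ back to itself. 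Under the hood the two arguments are close cousins --- the $i^+$ estimate in \cite{Andersson-Metzger:2007} is itself proved via essentially that surgery construction --- but invoking the estimate as a black box gives a shorter proof, at the price of not exhibiting the enclosing weakly outer trapped surface explicitly. The uniformity-in-$\tau$ of the constants, which you flag as the main worry, is asserted in exactly this form by the paper itself in the proof of Theorem~\ref{thm:mots-evolution}, so this is not a real gap. Two minor technical points to keep in mind: (i) for the perpendicularity and outer-normal direction argument to be clean you should take $\tau$ close enough to $T$ that the minimizing geodesic between $\Gamma_\tau^1$ and $\Gamma_\tau^2$ lies in the interior of $M_\tau$ away from $\del^-M_\tau$, which is ensured since the distance tends to zero; and (ii) when bounding $\dist(\Gamma_\tau^1,\Gamma_\tau^2)$ you should take the \emph{last} exit of $\gamma_\tau$ from $\overline{\CT_\tau^1}$ preceding the \emph{first} entry into $\overline{\CT_\tau^2}$, so that the subarc is genuinely contained in $\gamma_\tau$. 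With these remarks your argument stands.
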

\begin{figure}[h!t]
  \centering
  \resizebox{.9\linewidth}{!}{\input{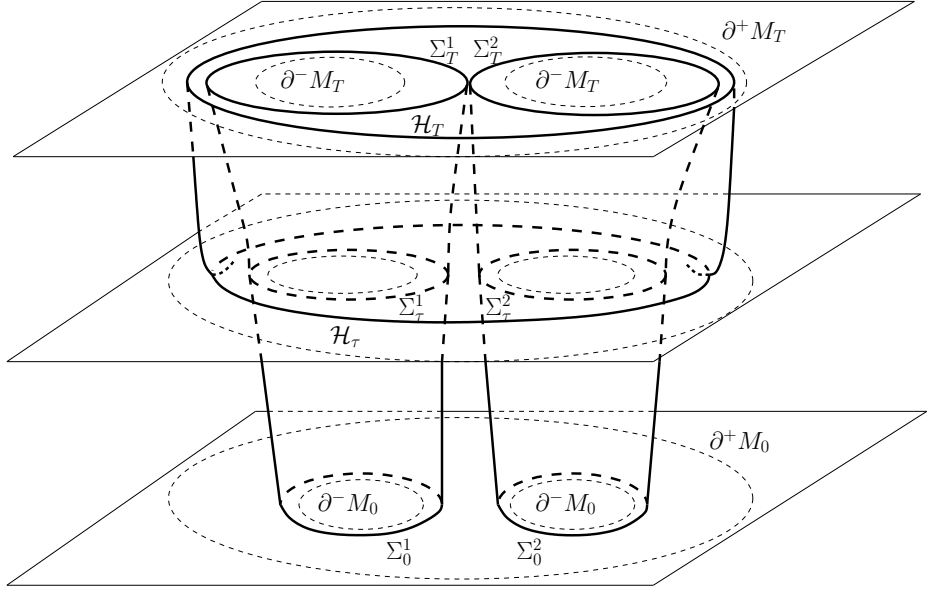}}
  \caption{The situation in Theorem~\ref{thm:coalescence}. Two
    separate MOTS $\Sigma_0^1$ and $\Sigma_0^2$ approach each
    other. At time $\tau<T$ a common enclosing MOTS $\CH_\tau$
    appears, the outermost MOTS thus jumps from two separate pieces to
    a common one. In section~\ref{sec:jump-targ-outerm} we will show
    if such a jump occurs, then the jump target at time $\tau$ will
    bifurcate into two branches at times $>\tau$. This is indicated by
    the dotted lines.}
\end{figure}
\begin{remark}
  The interpretation of the theorem is as follows. Assume that
  initially $\Sigma_t$ is outermost and that the $\Sigma_t$ form a
  smooth MOTT. If $\Sigma_t$ has two components $\Sigma_t^1$ and
  $\Sigma_t^2$ which approach each other, then before they make
  contact, $\Sigma_t$ must stop being outermost, say at time
  $t=\tau$. Hence the outermost MOTS jumps away from
  $\Sigma_\tau$. Instead, $\Sigma^1_\tau\cup \Sigma^2_\tau$ is contained in the
  interior of one connected component of $\CT_\tau$, and thus in the
  trapped region the interiors of $\Sigma^1_\tau$ and $\Sigma^2_\tau$ have
  merged. The outer boundary of this component is therefore a common
  MOTS enclosing $\Sigma_\tau^1$ and $\Sigma_\tau^2$.
\end{remark}
\begin{proof}
  The proof is based on the surgery procedure introduced in
  \cite[Section 6]{Andersson-Metzger:2007}. There, we were able to
  show that a MOTS that comes close to itself can be modified by
  inserting a small neck to construct a weakly outer trapped surface
  outside.

  In \cite[Section 6]{Andersson-Metzger:2007}, some work was invested
  into the point selection for surgery. Here, we do not need the
  special properties, but we do the surgery at the points
  $p_t\in\Sigma_t^1$ and $q_t\in\Sigma_t^2$ which realize the distance
  $\dist(\Sigma_t^1,\Sigma_t^2)$, provided this is small enough. The
  neck to be inserted is then of size comparable to
  $\dist(\Sigma_t^1,\Sigma_t^2)$, and has as axis the geodesic
  $\gamma_t$ joining $p_t$ and $q_t$ in $M$. The rest of the
  construction is otherwise analogous.

  Choose $0\leq \tau<T$ large enough, so that this procedure is
  applicable for all $t \in [\tau,T]$. Denote the region enclosed
  between $\Sigma_t$ and $\del^+ M_t$ by $\Omega_t$ and the geodesic
  joining $\Sigma^1_t$ and $\Sigma^2_t$ by $\gamma_t$. The result of
  the surgery procedure is a weakly outer trapped surface $\Sigma'_t$
  in $\Omega_t\cup \Sigma_t$, enclosing a region $\Omega'_t$ together
  with $\del^+ M$, such that $\Omega_t\setminus \Omega_t'$ contains a
  neighborhood of $\Sigma_t^1 \cup \Sigma_t^2\cup \gamma_t$ in
  $\Omega_t \cup \Sigma_t$.

  An application of theorem \ref{thm:trapped-region} to the manifold
  $\Omega_t'$ with inner boundary $\Sigma_t'$ and outer boundary
  $\del^+ M$ yields an outermost MOTS $\Sigma''_t$ in $\Omega_t'$,
  which is also the outermost MOTS in $M_t$.
\end{proof}


%
\section{Past isolated outermost MOTS}
\label{sec:jump-targ-outerm}
In this section we analyze the question, what happens if the outermost
MOTS jumps in time. To this end assume that $L$ is a spacetime
satisfying the NEC with a foliation 
\begin{equation*}
  L = M \times (-T, T)
\end{equation*}
by spacelike slices $M_t = M \times \{t\}$. As usual, we assume that
$\del M$ is the disjoint union $\del M= \del^- M \cup \del^+M$ and
that with respect to all data sets $(g_t, K_t)$ we have that
$\theta^+[\del^-M] <0$ and $\theta^+[\del^+M] > 0$. Then, in
particular, $M_0$ contains an outermost MOTS $\Sigma$.

We will now assume that $\Sigma\subset M_0$ is the target of a jump
in the outermost MOTSs in the $M_t$ for $t<0$. We formalize this in
the assumption that each component of $\Sigma$ be \emph{stable} and
\emph{past isolated}. Here stability is as defined in
section~\ref{sec:preliminaries}. A MOTS $\Sigma_\tau\subset M_\tau$ is
called \emph{past isolated} if there exists a neighborhood $U$ of
$\Sigma_\tau$ in $L$ such that $M_t \cap U$ does not contain a MOTS
for all $t\in(-T,\tau)$. We say that $\Sigma_\tau$ is \emph{present
  isolated} if there is a neighborhood $V$ of $\Sigma_\tau$ in
$M_\tau$ such that $\Sigma_\tau$ is the only MOTS in $V$.

A jump of the outermost MOTS arises for example in the coalescence of
MOTSs, as described in section~\ref{sec:coalescence-mots}, as after
the jump $\CH_t$ will be past isolated. We show that generically
$\Sigma$ locally splits into two branches of MOTSs in the immediate
future of $\Sigma$. Before we state the actual theorem, we have to
introduce some notation.

Let $n$ be the timelike future unit normal to $M_0$ in $L$ and let
$\nu$ be the spacelike outer unit normal to $\Sigma$ in $M_0$. Then we
define the null frame $l^\pm = n \pm \nu$ along $\Sigma$.

We denote by $W$ the function
\begin{equation*}
  W  = |\chi^+|^2 + \RicL(l^+, l^+)
\end{equation*}
The first term in $W$ is non-negative since it is a sum of squares,
whereas the NEC implies non-negativity of the second term.

We say that $\Sigma$ \emph{satisfies the genericity assumption in the
  spacetime $L$} if
\begin{equation}
  \label{eq:generic}
  W\not\equiv 0
  \quad\text{on}\quad
  \Sigma.
\end{equation}

Denote by $\Gamma^+$ the null-cone generated by the outgoing
null-normal $l^+$ of $\Sigma$ and by $\Gamma^-$ the null-cone
generated by the ingoing null-normal $l^-$. Denote by $\Gamma_t^\pm :
= \Gamma^\pm \cap M_t$ the cross-sections of $\Gamma^\pm$ in
$M_t$. Note that by the above assumptions $\Sigma$ lies in the
interior of $M_0$ and hence so do the $\Gamma_t^\pm$ for $|t|$ small
enough. Since we are only interested in the situation local to
$\Sigma$ we can assume that $T$ is so small that the $\Gamma_t^\pm$
are smooth surfaces in the interior of $M_t$.

In \cite{Andersson-Mars-Simon:2005} a MOTT was constructed near a
strictly stable MOTS. The following argument is an analogue to this
construction if the MOTS $\Sigma$ is only \emph{marginally stable}
that is, when $\Sigma$ is stable but not strictly stable, and
satisfies the genericity condition.
\begin{proposition}
  \label{thm:tube}
  Let $L$ and $(M_t,g_t,K_t)$ be as above. Assume that $\Sigma \subset
  M_0$ is a connected, marginally stable MOTS, that is the principal
  eigenvalue of $L_M$ on $\Sigma$ satisfies $\lambda=0$, and that
  $\Sigma$ satisfies the genericity assumption~\eqref{eq:generic}.

  Then there exists a three dimensional, spacelike, MOTT $\hat\CH$
  containing $\Sigma$ which is tangent to $M_0$ at $\Sigma$. There
  exits a neighborhood $U$ of $\Sigma$ such that all MOTS $\Sigma'
  \subset M_t \cap U$ for $t\in(-T,T)$ are of the form $\hat\CH \cap
  M_t$.
\end{proposition}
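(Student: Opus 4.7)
The plan is a Lyapunov--Schmidt reduction of the MOTS equation in a spacetime neighborhood of $\Sigma$, with the one-dimensional kernel of $L_M$ at $\lambda=0$ handled via a scalar bifurcation equation and the necessary non-degeneracy supplied by the genericity condition~\eqref{eq:generic}.

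First I would set up the relevant Banach-space map. Using the product structure of $L$, transport $\Sigma\subset M_0$ to each $M_t$ via the time coordinate and parametrize nearby surfaces as graphs $\Sigma_{t,f}$ in $M_t$ using $G^{M_t}_f$ as in Section~\ref{sec:preliminaries}. Define
\begin{equation*}
  F\colon(-T,T)\times U \to C^{0,\al}(\Sigma),\qquad
  F(t,f) := \theta^+[\Sigma_{t,f}]\circ G^{M_t}_f,
\end{equation*}
with $U$ a $C^{2,\al}$-neighborhood of $0$. Then $F(0,0)=0$ and $D_f F(0,0)=L_M$. Since the principal eigenvalue vanishes, $\ker L_M=\IR\phi$ with $\phi>0$, and by Krein--Rutman applied to the formal $L^2$-adjoint, $\ker L_M^*=\IR\phi^*$ with $\phi^*>0$. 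Split $C^{2,\al}(\Sigma)=\IR\phi\oplus X$ and $C^{0,\al}(\Sigma)=\IR\phi\oplus Y$, where $X,Y$ are the $L^2$-orthogonal complements of $\phi^*$; then $L_M\colon X\to Y$ is an isomorphism. Writing $f=s\phi+w$ with $w\in X$ and applying the classical implicit function theorem to the projected equation $\Pi_Y F(t,s\phi+w)=0$ produces a smooth solution $w=w(t,s)\in X$ with $w(0,0)=0$, reducing $F=0$ to the scalar bifurcation equation
\begin{equation*}
  G(t,s) := \int_\Sigma \phi^*\,F(t,s\phi+w(t,s))\dmu \;=\; 0.
\end{equation*}

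The heart of the argument is the behavior of $G$ at the origin. Immediately $\partial_s G(0,0)=\int\phi^*\,L_M\phi\dmu=0$. For $\partial_t G(0,0)$ I would compare the coordinate-time translate $\Sigma_t=\Sigma_{t,0}$ with the cross-section $\Gamma^+_{0,t}$ of the future null cone emanating from $\Sigma$. The null geodesics $\exp(\lambda l^+)$ reach $M_t$ at affine parameter $\lambda=\al t$ to leading order, so $\Gamma^+_{0,t}$ agrees with the graph $\Sigma_{t,\al t}$ up to higher order in $t$. The Raychaudhuri calculation used in the proof of Theorem~\ref{thm:mots-evolution} gives $\theta^+[\Gamma^+_{0,t}]=-\al W\,t+O(t^2)$, while the Taylor expansion $F(t,\al t)=t\,\partial_t F(0,0)+t\,L_M(\al)+O(t^2)$ forces
\begin{equation*}
  \partial_t F(0,0) \;=\; -\al W - L_M(\al).
\end{equation*}
Pairing with $\phi^*$ and using $\int\phi^*\,L_M(\al)\dmu=\int (L_M^*\phi^*)\,\al\dmu=0$ yields
\begin{equation*}
  \partial_t G(0,0) \;=\; -\int_\Sigma \phi^*\,\al\,W\dmu \;<\; 0,
\end{equation*}
where strict negativity uses $\phi^*>0$, $\al>0$, $W\geq 0$ and the genericity assumption $W\not\equiv 0$.

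With $\partial_t G(0,0)\neq 0$, the classical implicit function theorem solves $G(t,s)=0$ near $(0,0)$ as a smooth curve $t=t(s)$ with $t(0)=0$, and differentiating $G(t(s),s)\equiv 0$ at $s=0$ gives $t'(0)=-\partial_s G(0,0)/\partial_t G(0,0)=0$. Define $\hat\CH\subset L$ as the image of $s\mapsto\Sigma_{t(s),\,s\phi+w(t(s),s)}$ for $s\in(-\eps,\eps)$. Tangency of $\hat\CH$ to $M_0$ at $\Sigma$ follows from $t(0)=t'(0)=0$; the extra tangent direction at $s=0$ is $\phi\nu$, a purely spatial vector in $M_0$, so $\hat\CH$ is spacelike at $\Sigma$ and remains so in a neighborhood by continuity. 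Uniqueness---every MOTS in $M_t\cap U$ lies on $\hat\CH$---is the Lyapunov--Schmidt correspondence combined with the unique solvability of $G(t,s)=0$ for $t$ as a function of $s$. The main obstacle is the identification of $\partial_t F(0,0)$ modulo the range of $L_M$: one must separate the genuinely null Raychaudhuri contribution $-\al W$ (which survives pairing with $\phi^*$) from the $L_M$-exact remainder, and it is precisely at this step that the genericity hypothesis converts into the non-degeneracy $\partial_t G(0,0)<0$ driving the reduction.
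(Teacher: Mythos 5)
Your proposal is correct and follows essentially the same Lyapunov--Schmidt reduction as the paper's proof, including the two applications of the implicit function theorem, the identification of the tangent direction $\phi\nu$ at $\Sigma$, and the crucial use of the genericity condition to force $\partial_t G(0,0)<0$ in the scalar bifurcation equation. The only bookkeeping difference is that the paper flows the reference surfaces along $\alpha_0 l^+$ (so that $D_t\Theta(0,0)=-\alpha_0 W$ falls out directly), whereas you transport along $\partial_t$ and compare to the null cone afterwards; this picks up $L_M$-exact corrections --- your identity $\partial_t F(0,0)=-\alpha W - L_M(\alpha)$ in fact acquires a further term if the slicing has a shift with nonzero normal component at $\Sigma$ --- but all such pieces are annihilated by pairing with $\phi^*$, so the sign of $\partial_t G(0,0)$ and hence the conclusion are unaffected.
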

\begin{figure}[h!t]
  \centering
  \resizebox{.9\linewidth}{!}{\input{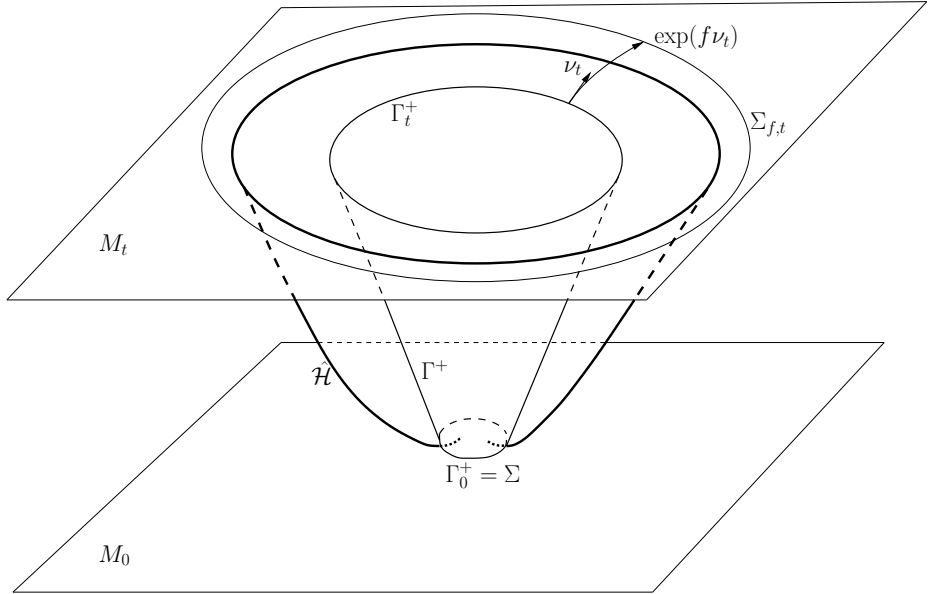}}
  \caption{The situation of Proposition~\ref{thm:tube}. $\Sigma$ is a
    marginally stable MOTS and satisfies the genericity
    assumption. Then it is contained in a MOTT $\hat\CH$ tangent to
    $M_0$.}
\end{figure}
\begin{proof}
  We work in H\"older spaces $C^{2,\alpha}$. Here the choice of
  $\alpha$ is not critical, so we fix one $\alpha>0$ for the remainder
  of the proof.

  We construct the surfaces $\Gamma_t^+$ as above and parametrize
  them by a map
  \begin{equation*}
    G_0 : \Sigma \times (-T,T) \to L 
  \end{equation*}
  such that $\Gamma^+_t = G_0(\Sigma,t)$ and $\dd{G_0}{t} = \alpha_0 l^+$, where
  $\alpha_t$ denotes the lapse function of the foliation $(M_t,g_t,K_t)$.
  
  There exists a neighborhood $\CU\subset C^{2,\alpha}(\Sigma)$ of
  $0\in C^{2,\alpha}(\Sigma)$ such that the map
  \begin{equation*}
    G_f
    :
    \Sigma \times (-T_0,T_0) \to L
    :
    (p,t) \mapsto \exp^{M_t}_{G_0(p,t)}\big(f(p)\nu_t(G_0(p,t))\big),
  \end{equation*}
  is well-defined for all $f\in \CU$. Here $\exp^{M_t}_{G_0(p,t)}$ is
  the exponential map of $M_t$ at $G_0(p,t)\in \Gamma_t^+$ and $\nu_t$
  is the outer normal of $\Gamma_t^+$. We set $\Sigma_{f,t}:=
  G_f(\Sigma,t)$ and assume that $\CU$ and $T$ are small enough to
  ensure that the $\Sigma_{f,t}$ are $C^{2,\alpha}$-surfaces in $M_t$.

  Consider the functional
  \begin{equation*}
    \Theta
    : \CU \times (-T,T) \to C^{0,\alpha}(\Sigma)
    : (f,t) \mapsto G_f(\cdot,t)^* \theta^+[\Sigma_{f,t}].    
  \end{equation*}
  Here $\theta^+[\Sigma_{f,t}]$ denotes $\theta^+$ evaluated on
  $\Sigma_{f,t}$ with respect to the data $(g_t,K_t)$.

  Since
  \begin{equation*}
    \ddeval{G_0}{t}{t=0} = \alpha_0 l^+\quad\text{and}\quad
    \ddeval{G_{sf}(\cdot,0)}{s}{s=0} =f\nu,
  \end{equation*}
  it is well-known (cf.~the setup
  in~\cite{Andersson-Mars-Simon:2005}) that
  \begin{equation*}
    D_f\Theta(0,0)f = L_M f,
    \quad\text{and}\quad
    D_t\Theta(0,0) = - \alpha_0 W
  \end{equation*}
  where $L_M$ is the linearization of $\theta^+$ in $M_0$.

  Since the principal eigenvalue of $L_M$ is equal to zero, the kernel
  of $L_M$ is one-dimensional and spanned by a positive function
  $\phi$. We denote by $X\subset C^{2,\alpha}$ the $L^2$-orthogonal
  complement of $\operatorname{span}\{\phi\}$ in $C^{2,\alpha}$ and
  decompose $C^{2,\alpha}(\Sigma) = X \oplus
  \operatorname{span}\{\phi\}$.

  Denote by $L_M^*$ the (formal) $L^2$-adjoint of $L_M$. Then $L_M^*$
  also has a kernel, which is spanned by a smooth, positive function
  $\psi$. We denote by $Y$ the $L^2$-orthogonal complement of $\operatorname{span}\{\psi\}$ in
  $C^{0,\alpha}(\Sigma)$ and by $P_Y$ the $L_2$-orthogonal projection
  of $C^{0,\alpha}(\Sigma)$ onto $Y$. Then $Y = \operatorname{range}(L_M)$ and
  $L_M|_X: X\to Y$ is an isomorphism.

  In particular, the implicit function theorem \cite[Theorem
  2.7.2]{Nirenberg:1973} implies that for the operator
  \begin{equation*}
    \Theta_Y : \CU \times (-T,T) \to Y : (f,t) \mapsto P_Y(\Theta(f,t))
  \end{equation*}
  there exist constants $\eps>0$, $\delta>0$ and a function
  \begin{equation*}
    x: (-\eps,\eps) \times (-\delta,\delta) \to X
    : (\kappa,t) \mapsto x(\kappa,t)
  \end{equation*}
  with $x(0,0)=0$ such that
  \begin{equation}
    \label{eq:3}
    \Theta_Y(x(\kappa,t) + \kappa\phi, t) = 0.
  \end{equation}
  for all $(\kappa,t) \in (-\eps,\eps) \times (-\delta,\delta)$. The
  uniqueness part of the implicit function theorem furthermore implies
  that all solutions $(y,s) \in X \times (-\delta,\delta)$ to the
  equation $\Theta_Y(y,s)=0$ with $y$ close enough to $0$ are of the
  form $y=x(\kappa,s)$.

  Differentiating equation~\eqref{eq:3} with respect to $\kappa$ at
  $(\kappa,t) = (0,0)$ yields that
  \begin{equation*}
    P_Y L_M \left(\ddeval{x}{\kappa}{(\kappa,t) = (0,0)} + \phi\right) = 0.
  \end{equation*}
  Since $X \cap \operatorname{ker} L_M = \{0\}$ we thus find that
  \begin{equation}
    \label{eq:1}
    \ddeval{x}{\kappa}{(\kappa,t)= (0,0)} = 0.
  \end{equation}    
  To construct MOTSs near $\Sigma$ it thus remains to solve the
  equation
  \begin{equation*}
    (1-P_Y) \Theta\big(x(\kappa,t) + \kappa\phi, t\big) = 0,
  \end{equation*}
  which is a scalar equation in two variables. To solve this equation
  we define the function
  \begin{equation*}
    \vartheta : (-\eps,\eps) \times (-\delta,\delta) \to \IR
    : (\kappa,t) \mapsto \int_\Sigma \psi\, \Theta\big(x(\kappa,t) +
    \kappa\phi, t\big) \dmu.
  \end{equation*}
  Note that by the above 
  \begin{equation*}
    \ddeval{}{t}{(\kappa,t)=(0,0)}\Theta\big(x(\kappa,t) + \kappa\phi, t\big)
    =
    L_M\left(\ddeval{x}{t}{t=0}\right) + D_t \Theta(0,0).
  \end{equation*}
  Since $L_M$ maps into $Y$ which is $L^2$-orthogonal to $\psi$, we
  thus find that
  \begin{equation*}
    D_t \vartheta(0,0) = -\int_\Sigma \alpha_0 W\psi\dmu.
  \end{equation*}
  By the non-degeneracy assumption $W\geq 0$ and $W\not\equiv 0$, we
  find that
  \begin{equation*}
    D_t \vartheta(0,0) < 0. 
  \end{equation*}
  Thus the implicit function theorem implies that there exists a
  $\eps'>0$ and a function $\tau: (-\eps',\eps')\to \IR$ with $\tau(0)=0$ such that
  \begin{equation*}
    \vartheta\big(\kappa, \tau(\kappa)\big) = 0
  \end{equation*}
  for all $\kappa\in (-\eps',\eps')$. Again all solutions close enough
  to zero are of this form.

  As before we can calculate that
  \begin{equation}
    \label{eq:2}
    \ddeval{\tau}{\kappa}{\kappa=0} = - (D_t \vartheta)^{-1}
    (D_\kappa\vartheta)\Big|_{(\kappa,t) = (0,0)} = 0.
  \end{equation}
  Define the map
  \begin{equation*}
    \Phi : \Sigma \times (-\eps',\eps') \to L
    :  (p,\kappa) \mapsto G_{x(\kappa, \tau(\kappa)) + \kappa\phi}\big(p, \tau(\kappa)\big).
  \end{equation*}
  By the above construction the surfaces $\Sigma_\kappa :=
  \Phi(\Sigma, \kappa)$ are MOTS in $M_{\tau(\kappa)}$. In view of
  equations~\eqref{eq:1} and~\eqref{eq:2} we calculate that
  \begin{equation*}
    \ddeval{\Phi}{\kappa}{(\kappa,t)=(0,0)}
    =
    \phi\nu
  \end{equation*}
  This vector field is nowhere zero on $\Sigma$ and normal to
  $\Sigma$. Thus the set
  \begin{equation*}
    \hat\CH := \Phi\big(\Sigma, (-\eps',\eps')\big)     
  \end{equation*}
  is a smooth manifold. Furthermore,
  $\ddeval{\Phi}{\kappa}{(\kappa,t)=(0,0)}$ is spacelike along
  $\Sigma$ and tangent to $M_0$. Hence we infer that $\hat\CH$ is spacelike
  and tangent to $M_0$.
\end{proof}
This proposition implies the main theorem of this section.
\begin{theorem}
  \label{thm:jump}
  Let $L$ and $(M_t,g_t,K_t)$ be as above and assume that $\Sigma
  \subset M_0$ is an outermost MOTS. Let $\Sigma'\subset \Sigma$ be a
  past isolated component of $\Sigma$.  If $\Sigma'$ satisfies the genericity
  assumption~\eqref{eq:generic} there exists a three dimensional,
  spacelike, MOTT $\hat\CH$ containing $\Sigma'$ which lies in $M_0$
  or to
  the future of $M_0$.

  Furthermore if $\Sigma'$ is also present isolated then $\hat\CH$ is such
  that
  \begin{enumerate}
  \item $M_0 \cap \hat\CH = \Sigma'$,
  \item $M_t \cap \hat\CH = \Sigma_t^- \cup \Sigma_t^+$ where
    $\Sigma_t^+$ is a MOTS outside of $\Gamma_t^+$ and $\Sigma^-_t$ is
    a MOTS inside $\Gamma^-_t$, provided $t>0$ is small enough.
  \end{enumerate}
\end{theorem}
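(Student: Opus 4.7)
The plan is to reduce to Proposition~\ref{thm:tube} by verifying that the principal eigenvalue on $\Sigma'$ vanishes, then to extract the geometric picture using the two isolation hypotheses.

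First I would show that the principal eigenvalue $\lambda$ of $L_M$ on $\Sigma'$ is zero. Since $\Sigma$ is outermost, Theorem~\ref{thm:trapped-region} gives stability $\lambda\geq 0$ on each connected component, so $\lambda\geq 0$ on $\Sigma'$. If $\lambda>0$, then the construction of~\cite{Andersson-Mars-Simon:2005} for strictly stable MOTS yields a smooth spacelike MOTT through $\Sigma'$ transverse to $M_0$, whose cross sections are MOTS in $M_t$ for small $t<0$; this contradicts past isolation. Hence $\lambda=0$ and Proposition~\ref{thm:tube} applies to $\Sigma'$, producing a spacelike MOTT $\hat\CH=\Phi(\Sigma',(-\eps',\eps'))$ tangent to $M_0$ along $\Sigma'$, with cross sections $\Sigma_\kappa\subset M_{\tau(\kappa)}$ satisfying $\tau(0)=\tau'(0)=0$, and the uniqueness property that every MOTS in $M_t$ close to $\Sigma'$ in $L$ is one of the $\Sigma_\kappa$.

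Past isolation applied once more forces $\tau\geq 0$: if $\tau(\kappa_0)<0$ for some small $\kappa_0\neq 0$, then $\Sigma_{\kappa_0}$ would lie in a past slice arbitrarily close to $\Sigma'$, a contradiction. This gives the first claim of the theorem, that $\hat\CH$ is contained in $M_0$ or to its future. Now assume in addition that $\Sigma'$ is present isolated. If $\tau(\kappa_0)=0$ for some small $\kappa_0\neq 0$, then since $\partial_\kappa\Phi|_{(p,0)}=\phi(p)\nu$ is nowhere vanishing, the surface $\Sigma_{\kappa_0}$ would be a MOTS in $M_0$ distinct from $\Sigma'$ but arbitrarily close to it, contradicting present isolation. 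Hence $\tau(\kappa)>0$ for $0<|\kappa|<\eps'$, proving~(i), and $\tau$ has a strict local minimum at $\kappa=0$. Implicit differentiation of $\vartheta(\kappa,\tau(\kappa))=0$ at $(0,0)$ yields $\tau''(0)=-D_\kappa^2\vartheta(0,0)/D_t\vartheta(0,0)$; using $D_t\vartheta(0,0)<0$ (from Proposition~\ref{thm:tube}, where genericity enters) together with $D_\kappa^2\vartheta(0,0)>0$ (to be read off the second variation of $\theta^+$ at $\Sigma'$ in the direction $\phi\nu$) one obtains $\tau''(0)>0$. For each small $t>0$ the equation $\tau(\kappa)=t$ then has exactly two solutions $\kappa^-(t)<0<\kappa^+(t)$, so $\hat\CH\cap M_t=\Sigma_{\kappa^+(t)}\cup\Sigma_{\kappa^-(t)}=:\Sigma_t^+\cup\Sigma_t^-$.

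For the positioning relative to the null cones in~(ii), note that by construction $\Sigma_\kappa$ is the $\nu_t$-graph over $\Gamma^+_{\tau(\kappa)}$ of $f(\kappa)=\kappa\phi+O(\kappa^2)$. For $\kappa=\kappa^+(t)\sim\sqrt{t}$, $f>0$, so $\Sigma_t^+$ lies outside $\Gamma^+_t$. For $\kappa=\kappa^-(t)$, $\Sigma_t^-$ is displaced inward of $\Gamma^+_t$ by order $\sqrt{t}$, whereas $\Gamma^-_t$ is displaced inward of $\Gamma^+_t$ by only order $t$ (since $l^\pm=n\pm\nu$ imply that the two null cones separate linearly in $t$ inside $M_t$); as $\sqrt{t}\gg t$ for small $t>0$, $\Sigma_t^-$ lies well inside $\Gamma^-_t$. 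The main obstacle I expect is the rigorous verification of $\tau''(0)>0$, which requires computing $D_\kappa^2\vartheta(0,0)$ via a second variation of $\theta^+$ at $\Sigma'$ and extracting the sign from the genericity hypothesis $W\not\equiv 0$; everything else follows directly from Proposition~\ref{thm:tube} together with past and present isolation.
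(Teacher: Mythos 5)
Your proof follows the paper's approach closely for the main line of argument: deducing that the principal eigenvalue of $L_M$ on $\Sigma'$ must vanish (nonnegativity from $\Sigma$ being outermost, strict positivity ruled out by past isolation via the strictly stable construction of~\cite{Andersson-Mars-Simon:2005}), invoking Proposition~\ref{thm:tube}, and then using past isolation to force $\tau\geq 0$ and present isolation to force $\tau>0$ for $\kappa\neq 0$. All of this matches the paper.

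Where you diverge is in the argument for claim~(ii), and here you have introduced a gap that the paper does not have. You reduce the positioning of $\Sigma_t^\pm$ relative to the null cones to the asymptotics $\kappa^\pm(t)\sim\sqrt{t}$, which requires the second-order information $\tau''(0)>0$, and you flag the second-variation computation of $D_\kappa^2\vartheta(0,0)$ as an unresolved step. But this step is not needed. The paper's argument uses only that $\hat\CH$ is spacelike and tangent to $M_0$ along $\Sigma'$, i.e.\ $\tau'(0)=0$. This already gives $\tau(\kappa)=o(\kappa)$, hence along $\hat\CH$ the spatial displacement from $\Sigma'$ (of order $\kappa$) dominates the temporal displacement $\tau(\kappa)$, so for $\kappa>0$ the tube stays strictly outside the outgoing null cone $\Gamma^+$ and for $\kappa<0$ it stays strictly inside the ingoing cone $\Gamma^-$; equivalently, $\Gamma^\pm$ and $\hat\CH$ are transverse at $\Sigma'$ and so locally intersect only in $\Sigma'$. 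Once $\tau\geq 0$ is known, the cross sections $\Sigma_t^\pm:=\hat\CH\cap M_t\cap\{\pm\kappa>0\}$ are then automatically MOTSs on the correct sides of $\Gamma_t^\pm$. So you do not need to establish, or worry about, a nondegenerate quadratic minimum for $\tau$; the first-order tangency plus one-sidedness suffices for the theorem as stated. The computation of $\tau''(0)$ would be relevant if one wished to show that $\hat\CH\cap M_t$ has exactly one component on each side, but the theorem does not assert that.
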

\begin{proof}
  Since $\Sigma$ is outermost each of its components and thus
  $\Sigma'$ in particular is stable. If $\Sigma'$ were strictly
  stable, then the result from \cite{Andersson-Mars-Simon:2005} could
  be used to to construct a MOTS extending to the future and past of
  $\Sigma'$, whence $\Sigma'$ can not be strictly stable. Thus
  $\Sigma'$ is marginally stable and proposition~\ref{thm:tube}
  implies the existence of a spacelike MOTT $\hat\CH$ containing
  $\Sigma'$, which is tangent to $\Sigma'$ at $M_0$. Since $\Sigma'$
  is past isolated $\hat\CH$ lies in $M_0$ or to the future of
  $M_0$. Since $\Sigma'$ is part of an outermost MOTS, the outgoing
  part of $\hat\CH$ must lie in the future of $M_0$.

  If $\Sigma'$ is present isolated, then the ingoing part of $\hat\CH$
  also lies to the future of $M_0$. The fact that $\hat\CH$ is tangent
  to $M_0$ and is spacelike implies the claims about the location of
  $\hat\CH$.
\end{proof}
\begin{remark}
  If $\Sigma$ is not present isolated, and the genericity
  assumption~\eqref{eq:generic} holds on $\Sigma$ we can perturb the
  slicing $M_t$ such that $M_0$ changes only inside of $\Sigma$ such
  that $\Sigma$ becomes present isolated.  In general, in the
  perturbed slicing $\Sigma$ need no longer be past isolated, even if
  it was originally.
\end{remark}

In~\cite[Theorem 9.4]{andersson-mars-simon:2007pub} it was shown that if
there is a MOTT $\hat\CH'$ in a slicing $(M_t,g_t,K_t)$ for $t\in [0,T)$
and the MOTS $\Sigma_t' = \hat\CH' \cap M_t$ are connected, strictly
stable and have a smooth limit $\Sigma_T'$ as $t\to T$, then if the
principal eigenvalue of $\Sigma_T'$ is zero, $\hat\CH'$ must be tangent to
$M_T$ along $\Sigma_T'$. Besides the non-degeneracy assumption, the
result there has a further technical assumption, on which we will not
comment.

Proposition~\ref{thm:tube} also implies that $\hat\CH$ is tangent to
$\Sigma_0$ provided its principal eigenvalue is zero. Thus the
argument given here can be used to give another proof of the tangency
property without further technical conditions. We can state the
following version of~\cite[Theorem 9.4]{andersson-mars-simon:2007pub}.
\begin{theorem}
  Let $L \supset M \times [0,T]$ be a partially sliced spacetime
  satisfying the NEC such that the data $(g_t,K_t)$ are smooth on the
  closure of $M \times [0,T]$, the boundary $\del M$ splits into $\del
  M = \del^-M\cup\del^+M$ and $\theta^+[\del^+M_t] >0 $ for all
  $t\in[0,T]$.

  Let $\hat\CH$ be a MOTT adapted to the slicing, such that $\Sigma_t =
  \hat\CH \cap M_t$ is a stable MOTS homologous to $\del^+M$ in $M_t$ for
  $t\in[0,T)$. If the area $|\Sigma_t|$ is bounded as $t\to T$ then
  there exists a stable MOTS $\Sigma_T$ in $M_T$ extending $\hat\CH$.  If
  the principal eigenvalue of $\Sigma_T$ is zero and $\Sigma_T$
  satisfies the genericity assumption~\eqref{eq:generic}, then $\hat\CH$
  is tangent to $\Sigma$.
\end{theorem}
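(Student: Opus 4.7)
The plan is to argue in two stages: first extract a limit MOTS $\Sigma_T \subset M_T$, then invoke Proposition~\ref{thm:tube} to obtain the tangency.

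\textbf{Step 1: existence of $\Sigma_T$.} The stable MOTS $\Sigma_t$ for $t<T$ have area bounded by hypothesis. For stable MOTS in a three-dimensional initial data set with ambient geometry controlled up to $t=T$, one has a uniform bound $|A|\leq C$ on the second fundamental form; this is the stability-based curvature estimate underlying Theorem~\ref{thm:estimates}, where the outermost hypothesis is used only to rule out $\lambda<0$ and not for the estimate itself in dimension three. Combined with the area bound, Arzel\`a--Ascoli applied to local graph representations yields subsequential $C^{2,\al}$ convergence $\Sigma_{t_k}\to\Sigma_T$ to an embedded surface in $M_T$. Passing to the limit in $\theta^+[\Sigma_{t_k}]=0$ with respect to $(g_{t_k},K_{t_k})$, using smoothness of the data up to $t=T$, yields $\theta^+[\Sigma_T]=0$ in $M_T$; continuity of the principal eigenvalue of $L_M$ under $C^{2,\al}$ convergence of both the surface and the ambient data shows $\Sigma_T$ is stable. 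The sign condition $\theta^+[\del^+M_t]>0$ keeps $\Sigma_T$ in the interior of $M_T$ and homologous to $\del^+M_T$. Together with $\hat\CH$ this produces the claimed extension, giving the first conclusion.

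\textbf{Step 2: tangency.} Since the principal eigenvalue of $\Sigma_T$ vanishes and \eqref{eq:generic} holds, Proposition~\ref{thm:tube} supplies a smooth spacelike MOTT $\hat\CH'$ through $\Sigma_T$ that is tangent to $M_T$ at $\Sigma_T$, and its uniqueness clause asserts that every MOTS in some neighborhood $U$ of $\Sigma_T$ in $L$ is of the form $\hat\CH'\cap M_t$. For $k$ large enough, the slices $\Sigma_{t_k}\subset\hat\CH$ lie in $U$ and hence coincide with cross-sections of $\hat\CH'$. Thus $\hat\CH$ and $\hat\CH'$ share a sequence of slices accumulating at $\Sigma_T$, and since both are smooth three-dimensional submanifolds this forces $\hat\CH=\hat\CH'$ on a neighborhood of $\Sigma_T$ in $L$. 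Tangency of $\hat\CH'$ to $M_T$ at $\Sigma_T$ then transfers to $\hat\CH$, which is the desired conclusion. As a byproduct, the uniqueness clause also removes any ambiguity about subsequential versus full convergence in Step~1.

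\textbf{Main obstacle.} The delicate point is Step~1: one needs a uniform $|A|$-bound and a positive lower bound on $i^+(\Sigma_t)$ for stable, not necessarily outermost, MOTS. This sharpening of Theorem~\ref{thm:estimates} is available in the literature for three-dimensional stable MOTS with bounded area and controlled ambient geometry, and it is essentially the only input beyond what is already proved in the paper; once $\Sigma_T$ has been produced, Step~2 reduces to a direct application of Proposition~\ref{thm:tube} and its uniqueness clause.
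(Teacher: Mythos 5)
Your argument follows the paper's proof precisely in both steps: Step~1 extracts $\Sigma_T$ by compactness of stable MOTS, and Step~2 applies Proposition~\ref{thm:tube} together with the implicit function theorem's uniqueness clause to force $\hat\CH = \hat\CH'$ near $\Sigma_T$, from which tangency follows. The only imprecision is in Step~1: a uniform $|A|$-bound gives $C^{1,1}$ control of local graph representations, so Arzel\`a--Ascoli alone yields $C^{1,\alpha}$ (not $C^{2,\alpha}$) convergence; the paper makes explicit that one must then invoke elliptic regularity for the equation $\theta^+=0$ to bootstrap to $C^{2,\alpha}$ and higher. The ``main obstacle'' you flag --- curvature and injectivity bounds for stable, not necessarily outermost, MOTS under an area bound --- is exactly the content of the compactness theorem the paper cites (\cite[Theorem 8.1]{Andersson-Metzger:2005}), so the ingredient you want is available and is precisely what the paper uses.
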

\begin{proof}
  The existence of the limit $\Sigma_T$ follows from the compactness
  of stable MOTS \cite[Theorem 8.1]{Andersson-Metzger:2005}. Although
  convergence there is only asserted in $C^{1,\al}$, in view of
  elliptic regularity for the equation $\theta^+ = 0$ together with
  the $C^{1,\al}$- bounds, this implies $C^k$-convergence for all
  $k\geq 0$, provided $L$ is smooth enough.

  Assuming that $\Sigma_T$ has principal eigenvalue zero and satisfies
  the genericity condition, we can construct a MOTT $\hat\CH'$ near
  $\Sigma_T$ which is tangent to $M_T$ as in the proof of
  theorem~\ref{thm:jump}. Since the implicit function theorem implies
  that $\hat\CH'$ is the unique adapted MOTT near $\Sigma_T$ it has to
  agree with $\hat\CH$ and hence the theorem is proved.
\end{proof}
\begin{remark}
  The previous argument has an interesting implication for the
  continuation of MOTTs. In fact, the constructed $\hat\CH'$ continues
  $\hat\CH$ beyond $\Sigma_T$. However, it is not clear whether this
  continuation does extend $\hat\CH$ to the future. If the $\hat\CH'$ curves
  to the past, we can conclude that the $\Sigma_t$ were not
  outermost. Hence, if the $\Sigma_t$ are outermost for $t\in[0,T)$,
  then $\hat\CH$ can be continued, either as a foliation of $M_T$ by MOTSs
  near $\Sigma_T$ or to the future of $M_T$. In this case also the
  area bound is automatic for the $\Sigma_t$
  (cf.~\cite[Theorem 6.5]{Andersson-Metzger:2007}).
\end{remark}


%
\section{Regularity of MOTTs}
\label{sec:regularity}
In this section, we use the arguments from
section~\ref{sec:jump-targ-outerm} to analyze regularity of the set
\begin{equation*}
  \CH := \bigcup_{t\in [0,T]} \del^+\CT_t \subset M \times [0,T]
\end{equation*}
constructed in section~\ref{sec:evol-outerm-mots}. Before we consider
the more specific setting of section~\ref{sec:jump-targ-outerm} where
the genericity assumption is assumed, we make some general
observations about $\CH$.

Let $\tau \in (0,T)$. The compactness theorem for stable MOTSs in
\cite[Theorem 8.1]{Andersson-Metzger:2005} in combination with the
area estimate \cite[Theorems 6.3 and 6.5]{Andersson-Metzger:2007}
guarantees that as $t\nearrow \tau$ the embedded, stable MOTS $\CH_t$
accumulate on an embedded, stable MOTS $\Sigma^P_{\tau} \subset
\CT_{\tau}$. Using the $C^{1,\alpha}$ result in the reference and
elliptic regularity, we can assume that this is in $C^{2,\alpha}$ if
the ambient spacetime metric is smooth enough, that is $C^{2,\al}$.

We now introduce projections $\pi_{t,\tau} : M_t \to M_\tau$ which
project a point $x\in M_t$ to the intersection of the integral curve
of $\dd{}{t}$ through $x$ with $M_\tau$. By the causal structure of
the $\CH_t$, we find that all projections of the $\CH_t$ for $t<\tau$
lie inside of $\CH_\tau$. As the limit of $\CH_t$ as $t\nearrow \tau$
agrees with the limit of the $\pi_{t,\tau}(\CH_t)$, we see that this
limit is one-sided. Thus we can conclude that the $\CH_t$ actually
converge to a unique limit, which is then given by this
$\Sigma^P_{\tau}$. Since we have a positive lower bound on the outward
injectivity as in theorem \ref{thm:estimates}, we find that the limit
$\Sigma^P_\tau$ must be embedded.

Analogously, we can take a limit of the MOTS $\CH_t$ as
$t\searrow \tau$. Again \cite[Theorem 8.1]{Andersson-Metzger:2005} in
combination with \cite[6.5]{Andersson-Metzger:2007} guarantee that we
get a limit $\Sigma^F_{\tau}$, with convergence from the outside in
the sense given above. However, $\Sigma^F_{\tau}$ need no longer be
embedded since \cite[Theorem 6.3]{Andersson-Metzger:2007} only implies
that the limit can not touch itself from the outside. This causes some
difficulties below.
\begin{definition}
  Assume that $\tau\in(0,T)$.
  \begin{enumerate}
  \item The MOTS $\Sigma^P_{\tau}$ is called \emph{limit from the
      past}, whereas $\Sigma^F_{\tau}$ is called \emph{limit from the
      future}.
  \item If $\Sigma^P_{\tau}\neq \CH_{\tau}$ then $\tau$ is called
    \emph{past jump time}. If $\Sigma^F_{\tau}\neq \CH_{\tau}$ then
    $\tau$ is called \emph{future jump time}.
  \item $\tau$ is called \emph{jump time} if it is either a future or
    past jump time.
  \end{enumerate}
\end{definition}
\begin{remark}
  \begin{enumerate}
  \item By definition $\Sigma^P_{\tau}$ lies in $\CT_{\tau}$ and thus,
    at each jump time $\tau$ the volume between $\Sigma^P_{\tau}$ and
    $\del^+\CT_{\tau}$ is positive. This implies that there are only
    countably many past jump times in $[0,T]$.
  \item Similarly, if $\Sigma^F_{\tau}$ is embedded, then it also lies
    in $\CT_{\tau}$ and thus agrees with $\del^+\CT_{\tau}$. Hence
    $\tau$ is a future jump time, if and only if $\Sigma^F_{\tau}$ is
    not embedded.  In this case the limit is from the outside and thus
    $\Sigma^F_{\tau}$ can not intersect the interior of $\del\CT_\tau$
    since all the projections $\pi_{t,\tau} (\del\CT_t)$ for $t>\tau$
    lie outside of $\CH_\tau$ due to the causal structure of
    $\CH$. Hence there also must be some volume between
    $\CH_{\tau}$ and $\Sigma^F_{\tau}$. This implies that there
    are only countably many future jump times in $[0,T]$.
  \item The causal structure, that is local achronality, of $\CH$
    implies that it is of class $C^{0,1}$ near non-jump times.
  \end{enumerate}  
\end{remark}
This is very little information on the regularity of $\CH$. We
actually expect that the jump times are discrete and that, if the
slicing behaves well, in fact there are only finitely many
jumps. Moreover, what is the regularity of $\CH$ in spacetime at times which 
are not jump times? We will answer these questions below under the genericity 
assumption.

We want to pose a few further interesting questions that we do not
address here. If $\tau$ is a past or future jump time, then one would
also like to compare the area of the $\Sigma^{P/F}_{\tau}$ and
$\CH_{\tau}$.

In case the spacetime settles to a steady state, we expect  that at late 
times $\CH$ is a smooth MOTT and approaches the event horizon, provided the 
spacetime approaches a stationary state. 

Furthermore, in some special situations we expect that $\CH$ is part
of a single smooth MOTT $\hat\CH$ even if there are jump times. An
example of this is described in~\cite{Andersson-Mars-Simon:2005}. It
is then interesting to investigate the causal character of
$\hat\CH\setminus\CH$.

Before we turn to the local regularity theorem, we introduce some
notation. If $I \subset [0,T]$ we denote 
\begin{equation*}
  \CH_I := \bigcup_{t\in I} \CH_{t} \subset \CH.
\end{equation*}
\begin{theorem}
  \label{thm:local_regularity}
  Let $\tau\in (0,T)$ and assume that each component of $\CH_{\tau}$ is
  either strictly stable or satisfies the genericity
  assumption~\eqref{eq:generic}.
  \begin{enumerate}
  \item If $\tau$ is not a past jump time then there exists a $\delta^- =
    \delta^-(\tau)>0$ such that $\CH_{(\tau-\delta^-,\tau]}$ is a smooth
    MOTT.
  \item If $\tau$ is not a future jump time then there exists a
    $\delta^+ = \delta^+(\tau)>0$ such that $\CH_{[\tau,\tau+\delta^+)}$ is a
    smooth MOTT.
  \item If $\tau$ is not a jump time then there exists a
    $\delta=\delta(\tau)>0$ such that $\CH_{(\tau-\delta,\tau+\delta)}$
    is a smooth MOTT. In particular $(\tau-\delta,\tau+\delta)$ does not
    contain further jump-times.
  \end{enumerate}
\end{theorem}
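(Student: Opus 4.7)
The plan is to treat each connected component of $\CH_\tau$ separately and then patch. Decompose $\CH_\tau = \bigsqcup_i \Sigma_i$ into its finitely many components; each $\Sigma_i$ is stable by Theorem~\ref{thm:trapped-region}, so by hypothesis each is either strictly stable or marginally stable satisfying~\eqref{eq:generic}. For each $\Sigma_i$ I will produce a local smooth MOTT $\hat\CH^i$ containing $\Sigma_i$ together with a neighborhood $U_i$ of $\Sigma_i$ in $L$ such that every MOTS in $M_t \cap U_i$ sufficiently $C^{2,\alpha}$-close to $\Sigma_i$ lies on $\hat\CH^i$. In the strictly stable case this is the IFT construction of \cite{Andersson-Mars-Simon:2005}, giving $\hat\CH^i$ transverse to $M_\tau$ on both sides of $\tau$. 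In the marginally stable case this is exactly Proposition~\ref{thm:tube}, giving $\hat\CH^i$ tangent to $M_\tau$; the uniqueness statement there furnishes the required $U_i$.

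The next step matches $\CH_t$ with $\hat\CH^i$ for $t$ near $\tau$ using the $C^{2,\alpha}$-convergence of $\CH_t$ to $\Sigma^P_\tau$ (resp.\ $\Sigma^F_\tau$) as $t \nearrow \tau$ (resp.\ $t \searrow \tau$), which follows from the stable-MOTS compactness theorem of \cite{Andersson-Metzger:2005} combined with elliptic regularity for the equation $\theta^+ = 0$. For Part 1, the hypothesis $\Sigma^P_\tau = \CH_\tau$ implies that for $t < \tau$ close enough to $\tau$, the portion of $\CH_t$ near $\Sigma_i$ is contained in $U_i$ and hence lies on $\hat\CH^i$ by uniqueness. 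In the strictly stable case this yields $\CH_t \cap U_i = \hat\CH^i \cap M_t$, a single transverse slice. In the marginally stable case the very existence of these past MOTS on $\hat\CH^i$ forces the tube to extend into $\{t < \tau\}$, and the outermost property of $\CH_t$ selects the outer branch of $\hat\CH^i \cap M_t$ (of which there are at most two, by the parametrization in the proof of Proposition~\ref{thm:tube}). Patching the outer pieces over the finitely many $\Sigma_i$, the set $\CH_{(\tau-\delta^-,\tau]}$ inherits a smooth-manifold-with-boundary structure from the $\hat\CH^i$, giving Part 1. Part 2 is the mirror argument. For Part 3, the one-sided pieces produced by Parts 1 and 2 near each $\Sigma_i$ are both subsets of the same local $\hat\CH^i$ and hence glue smoothly through $\Sigma_i$; the ``in particular'' clause is immediate, since any MOTS in $U_i$ lies on $\hat\CH^i$ by uniqueness, leaving no room for a further jump time in $(\tau-\delta,\tau+\delta)$.

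The main obstacle is the marginally stable case, where $\hat\CH^i$ is tangent to $M_\tau$. The slices $\hat\CH^i \cap M_t$ may have zero, one, or two components near $\Sigma_i$ depending on the local behavior of the function $\tau(\kappa)$ from the proof of Proposition~\ref{thm:tube}, so one must invoke the outermost property of $\CH_t$ to pick the correct branch and then verify that gluing this branch to the tangent tip $\Sigma_i$ yields a smooth manifold rather than a set with a cusp. This last point is rescued by the smooth $(p,\kappa)$-parametrization $\Phi$ of $\hat\CH^i$: the local MOTT is smooth as a manifold even where its projection to the $t$-axis is critical, so the assembled set $\CH$ inherits smoothness through $\tau$.
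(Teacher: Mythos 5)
Your proposal follows the same strategy as the paper's proof: construct a local MOTT around each component of $\CH_\tau$ via the implicit function theorem in the strictly stable case and via Proposition~\ref{thm:tube} in the marginally stable generic case, then use the uniqueness clause of the IFT to conclude that the nearby $\CH_t$ lie on these tubes. Your elaboration on the marginally stable case (that the outermost property singles out the outer branch of the tangent tube and that the smooth $(p,\kappa)$-parametrization of $\hat\CH^i$ guarantees no cusp at the tangency) supplies detail the paper leaves implicit, but it is the same argument.
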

\begin{proof}
  We only show the first assertion, since the second is proved
  similarly and the third is a consequence of the first two.

  Thus assume that $\tau\in (0,T]$ is not a past jump time. In case
  $\CH_{\tau}$ is strictly stable, we can apply the implicit function
  theorem as in \cite{Andersson-Mars-Simon:2005} to construct a smooth
  MOTT extending $\CH_{\tau}$ to the past and the future. In case
  $\CH_{\tau}$ is not strictly stable but satisfies the genericity
  condition~\eqref{eq:generic}, we can apply
  proposition~\ref{thm:tube} to construct an ingoing and outgoing MOTT
  around $\CH_{\tau}$. Since the uniqueness part of the implicit
  function theorem implies that in both cases the respective adapted MOTTs
  are unique near $\CH_{\tau}$, we get that in
  particular the $\CH_t$ for $t\in (\tau-\delta^-, \tau]$ lie on this
  MOTT. Here $\delta(\tau)$ is a positive number depending on the
  geometry of $\CH_{\tau}$ in $L$. Hence $\CH_{(\tau-\delta^-, \tau]}$
  agrees with this MOTT and is smooth.
\end{proof}
The structure of $\CH$ near jump times is analyzed in the following theorem.
\begin{theorem}
  \label{thm:bdry-regularity}
  \begin{enumerate}
  \item Let $\tau\in (0,T)$ be a past jump time, and assume that each
    component of $\Sigma^P_{\tau}$ is either strictly stable or
    satisfies the genericity assumption. Then there exists a
    $\delta^-=\delta^-(\tau)>0$ such that $\CH_{(\tau-\delta^-,\tau)}$ is
    a smooth MOTT which extends to a smooth MOTT $\CH'$ such that
    $\CH'$ includes $\Sigma^P_{\tau}$.
  \item Let $\tau\in (0,T)$ be a future jump time, and assume that each
    component of $\Sigma^F_{\tau}$ is either strictly stable or
    satisfies the genericity assumption. Then there exists a
    $\delta^+=\delta^+(\tau)>0$ such that $\CH_{(\tau, \tau+\delta^+)}$
    is a smooth MOTT which extends to the past as a smooth immersed
    MOTT by adding $\Sigma^F_{\tau}$.
  \end{enumerate}
\end{theorem}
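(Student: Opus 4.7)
The plan is to prove each part by combining the limit MOTS convergence (from the compactness result \cite[Thm.~8.1]{Andersson-Metzger:2005} together with the area bound \cite[Thm.~6.5]{Andersson-Metzger:2007}), the construction of an adapted MOTT near each component of the limit, and the uniqueness clause of the implicit function theorem to identify the resulting MOTT with $\CH$ near $\tau$.

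For part (i), I would start by noting that $\Sigma^P_\tau\subset\CT_\tau$ is an embedded stable MOTS, obtained in $C^{2,\al}$ as the one-sided limit of the outermost MOTS $\CH_t$ as $t\nearrow\tau$. Working component-by-component, I split into two cases. If a component $\Sigma'$ of $\Sigma^P_\tau$ is strictly stable, I apply the implicit function theorem (as in \cite{Andersson-Mars-Simon:2005,andersson-mars-simon:2007pub}) to the operator $\Theta(f,t)$ from the proof of proposition~\ref{thm:tube}: since $L_M$ is invertible there, one gets a unique smooth MOTT $\hat\CH'$ adapted to the slicing through $\Sigma'$, defined on $(\tau-\delta^-,\tau+\delta^-)$. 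If the component is only marginally stable but satisfies~\eqref{eq:generic}, I instead apply proposition~\ref{thm:tube} to obtain a smooth spacelike MOTT $\hat\CH'$ through $\Sigma'$, together with a neighbourhood $U$ of $\Sigma'$ in $L$ such that every MOTS sitting in $M_t\cap U$ for $|t-\tau|$ small is of the form $\hat\CH'\cap M_t$.

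In both cases the decisive ingredient is the uniqueness assertion. The components of $\CH_t$ for $t$ just below $\tau$ converge to $\Sigma'$ from outside, hence eventually enter $U$; by uniqueness they must coincide with $\hat\CH'\cap M_t$. Doing this over each component of $\Sigma^P_\tau$ and intersecting the resulting intervals (the total number of components is controlled by the uniform area and curvature bounds of theorem~\ref{thm:estimates}), I produce a single $\delta^-=\delta^-(\tau)>0$ and a smooth MOTT $\CH'$ which contains $\CH_{(\tau-\delta^-,\tau)}$ as an open subset and whose slice at time $\tau$ is $\Sigma^P_\tau$. This establishes (i).

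For part (ii) the strategy is identical, but applied to the future limit $\Sigma^F_\tau$. The only new point is that $\Sigma^F_\tau$ may not be embedded — it is obtained as a limit from the outside of the embedded $\CH_t$ for $t>\tau$, and \cite[Thm.~6.3]{Andersson-Metzger:2007} only precludes outside self-contact. However, proposition~\ref{thm:tube} is a local statement around each component of the limit (after passing, if necessary, to a small tubular neighbourhood on which $\Sigma^F_\tau$ is embedded); assembling the pieces yields an immersed smooth MOTT which extends $\CH_{(\tau,\tau+\delta^+)}$ by adjoining $\Sigma^F_\tau$.

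The main obstacle is the matching between the abstract MOTT produced by proposition~\ref{thm:tube} (or by the strictly stable implicit function theorem) and the a priori merely $C^{0}$ set $\CH_{(\tau-\delta^-,\tau)}$. Two points are subtle: first, the genericity case gives a MOTT tangent to $M_\tau$, so one must check that the branch that actually realizes $\CH_t$ for $t<\tau$ is the one lying to the past of $M_\tau$; this is forced because $\Sigma^P_\tau$ is the limit from the past and each $\CH_t$ is outermost and hence stable. Second, in the future-jump case the immersion may cause different local MOTTs through distinct components of the limit to intersect; the local nature of the uniqueness statement in proposition~\ref{thm:tube} is exactly what permits handling this without losing smoothness of the immersed continuation.
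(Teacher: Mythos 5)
Your proposal is correct and follows essentially the same route as the paper: apply, component by component, either the strictly-stable implicit-function-theorem construction of Andersson--Mars--Simon or Proposition~\ref{thm:tube} to the past/future limit $\Sigma^{P/F}_{\tau}$, and then invoke the uniqueness clause of the implicit function theorem to force the outermost MOTS $\CH_t$ near $\tau$ to lie on the constructed MOTT. The only detail the paper records that you do not is the observation that, because the tangent to the immersed tube has positive inner product with the outward normal (respectively future direction) along $\Sigma^F_{\tau}$, the tube slices lying strictly outside or to the future of $\Sigma^F_{\tau}$ are automatically embedded; also, your appeal to stability of $\CH_t$ for the branch-matching step is superfluous --- the identification is already forced by the one-sided convergence $\CH_t\to\Sigma^P_\tau$ together with the local uniqueness of the adapted MOTT.
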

\begin{proof}
  The proof follows from the same argument as before, by an
  application of the implicit function theorem to components of
  $\Sigma^P_{\tau}$ or $\Sigma^F_{\tau}$.

  Note that in particular non-embeddedness of $\Sigma^F_{\tau}$ is not
  an issue, since it may only touch itself from the inside. The
  implicit function theorem in Proposition~\ref{thm:tube} can also be
  applied to immersed surfaces to construct an immersed tube around
  $\Sigma^F_\tau$. As this construction implies that the scalar
  product of the future pointing or outward tangent to the tube and
  the outer normal to $\Sigma^F_\tau$ is positive, we infer that the
  MOTS along the tube which are outside or to the future of
  $\Sigma^F_\tau$ are indeed embedded.
\end{proof}
Combining theorems~\ref{thm:local_regularity}
and~\ref{thm:bdry-regularity} we arrive at the following global
statement.
\begin{theorem}
  Assume that all components of the following MOTS are either strictly
  stable or satisfy the genericity assumption~\eqref{eq:generic}:
  \begin{enumerate}
  \item $\CH_t$, for all $t\in[0,T]$,
  \item $\Sigma^P_t$ whenever $t$ is a past jump time, and
  \item $\Sigma^F_t$ whenever $t$ is a future jump time.
  \end{enumerate}
  Then there are finitely many times
  \begin{equation*}
    0 = \tau_0 < \tau_1 < \ldots \tau_N < \tau_{N+1} = T
  \end{equation*}
  such that each $\tau_k$ for $k=1,\ldots,N$ is a jump time and the piece
  \begin{equation*}
    \CH_{(\tau_k,\tau_{k+1})}
  \end{equation*}
  for $k=0,\ldots,N$ is a smooth MOTT, which can be extended as a
  smooth immersed MOTT by adding $\Sigma^F_{\tau_k}$ in the past, and
  $\Sigma^P_{\tau_{k+1}}$ in the future.
\end{theorem}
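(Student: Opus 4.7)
The plan is to reduce the theorem to a single nontrivial point, namely that the set $\CJ \subset [0,T]$ of jump times is discrete (hence finite, since $[0,T]$ is compact). Once this is established, the partition and extension statements follow essentially mechanically from Theorem~\ref{thm:local_regularity} on the open subintervals between consecutive jump times and from Theorem~\ref{thm:bdry-regularity} at the jump times themselves.

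To show discreteness of $\CJ$, let $\tau\in(0,T)$ be a jump time. By hypothesis every component of $\CH_\tau$, of $\Sigma^P_\tau$ (if $\tau$ is a past jump time), and of $\Sigma^F_\tau$ (if $\tau$ is a future jump time) is either strictly stable or satisfies the genericity assumption, so both of the local theorems apply at $\tau$. If $\tau$ is a past jump time, Theorem~\ref{thm:bdry-regularity}(i) supplies a $\delta^->0$ such that $\CH_{(\tau-\delta^-,\tau)}$ is a smooth MOTT; otherwise $\tau$ is not a past jump time and Theorem~\ref{thm:local_regularity}(i) gives the same conclusion. Either way the interval $(\tau-\delta^-,\tau)$ is disjoint from $\CJ$. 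The symmetric argument (with Theorem~\ref{thm:bdry-regularity}(ii) or Theorem~\ref{thm:local_regularity}(ii)) produces a $\delta^+>0$ with $(\tau,\tau+\delta^+)$ disjoint from $\CJ$. Thus every jump time is isolated in $[0,T]$, so $\CJ$ is finite. I then enumerate $\CJ = \{\tau_1<\cdots<\tau_N\}$ and set $\tau_0 := 0$, $\tau_{N+1}:=T$.

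On each open interval $(\tau_k,\tau_{k+1})$ there are no jump times, so at every $\sigma$ in this interval Theorem~\ref{thm:local_regularity}(iii) yields an open neighborhood on which $\CH$ is a smooth MOTT. The uniqueness clause of the implicit function theorem, as used in Proposition~\ref{thm:tube} (and in the strictly stable version of~\cite{Andersson-Mars-Simon:2005}), forces all these local adapted MOTTs to agree on overlaps. Hence $\CH_{(\tau_k,\tau_{k+1})}$ is globally a smooth MOTT.

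It remains to extend each such piece by $\Sigma^F_{\tau_k}$ at the past end and by $\Sigma^P_{\tau_{k+1}}$ at the future end. If $\tau_k$ is a future jump time, Theorem~\ref{thm:bdry-regularity}(ii) produces exactly the desired extension as an immersed MOTT containing $\Sigma^F_{\tau_k}$; if $\tau_k$ is only a past jump time, then $\Sigma^F_{\tau_k}=\CH_{\tau_k}$ is embedded and the extension is the ordinary smooth continuation from Theorem~\ref{thm:local_regularity}(ii). The future endpoint $\tau_{k+1}$ is treated symmetrically using Theorem~\ref{thm:bdry-regularity}(i) and Theorem~\ref{thm:local_regularity}(i). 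The main point requiring care, and the step I expect to be the subtlest, is the possible non-embeddedness of $\Sigma^F_{\tau_k}$; this is precisely why the extension is stated as immersed, and it is controlled because Proposition~\ref{thm:tube} applies equally well to immersed surfaces, as noted in the proof of Theorem~\ref{thm:bdry-regularity}. A secondary subtlety is that at a $\tau_k$ which happens to be both a past and a future jump time, the two halves of the extension must be consistent, but uniqueness of the adapted MOTT near $\Sigma^F_{\tau_k}$ forces this consistency.
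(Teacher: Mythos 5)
Your route is the one the paper intends: the paper gives no explicit proof, stating only that the result follows by combining Theorems~\ref{thm:local_regularity} and~\ref{thm:bdry-regularity}, and your argument fills in exactly that combination. The handling of the extensions at the ends of each piece, and of the possibly non-embedded $\Sigma^F_{\tau_k}$, is correct and matches the paper's intent.

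There is, however, a genuine (if small) gap in the finiteness step. From ``every jump time has a punctured neighborhood free of jump times'' you conclude that the set of jump times is finite, ``since $[0,T]$ is compact.'' But a discrete subset of a compact interval need not be finite --- $\{1/n : n\geq 1\}\subset[0,1]$ is discrete and infinite, accumulating at a point outside the set. What you have actually shown is that no jump time is an accumulation point of the set of jump times; you must additionally rule out accumulation at a \emph{non}-jump time. That is precisely what Theorem~\ref{thm:local_regularity}(iii) supplies: every non-jump time $\sigma\in(0,T)$ has a full open neighborhood containing no jump times. Combining this with your punctured neighborhoods at jump times, and with the corresponding one-sided statements at $t=0$ and $t=T$ (available since the hypothesis covers $\CH_0$ and $\CH_T$), one obtains an open cover of $[0,T]$ by sets each containing at most one jump time; a finite subcover then gives finiteness. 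You do invoke Theorem~\ref{thm:local_regularity}(iii) two sentences later to establish smoothness of each piece, so all the needed ingredients are on the table, but the finiteness deduction as written is incomplete.
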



%
\section{The relation of outer trapped surfaces to singularities}
\label{sec:relat-outer-trapp}
In this section, we revisit the classical singularity theorem of
Penrose \cite[Section 8]{Hawking-Ellis:1973} in the perspective of
outer trapped surfaces. In particular, we want to clarify that under
suitable assumptions, the presence of an outer trapped surface ---
without assumptions on the ingoing expansion --- implies that a
spacetime is not geodesically complete. Although the results in this
section are not new, they do not seem to be well-known
either. Therefore we give a short presentation of the arguments
involved. The ideas presented here appear in the classical reference
\cite{Hawking-Ellis:1973} and in \cite{gannon:1976}. In fact, our
argument is very close to \cite{totschnig:1994}.
\begin{theorem}
  Let $(L,h)$ be a globally hyperbolic Lorentzian spacetime
  satisfying the null energy condition $\RicL(v,v)\geq 0$ for all
  null-vectors $v$.

  Assume that $L$ contains a Cauchy surface $M$ such that $\Sigma
  \subset M$ is a $C^2$-surface which separates $M$ into two
  disconnected parts $M \setminus \Sigma = M^- \cup M^+$. Let the
  outer normal along $\Sigma$ be the one pointing into $M^+$. If
  $\theta^+[\Sigma] <0$, where $\theta^+[\Sigma]$ is calculated with
  respect to this choice of normal, and $M^+$ is a connected,
  non-compact manifold with boundary $\Sigma$, then $L$ is not null
  geodesically complete.
\end{theorem}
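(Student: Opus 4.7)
The plan is to argue by contradiction, adapting Penrose's compactness argument to the one-sided trapped hypothesis $\theta^+[\Sigma]<0$ by using the non-compactness of $M^+$ as a substitute for a sign assumption on the ingoing null expansion $\theta^-$.

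Suppose $L$ is null geodesically complete. Along the future-directed outgoing null congruence emanating from $\Sigma$ in direction $l^+$, the Raychaudhuri equation together with the NEC yields $\frac{d\theta}{d\lambda}\leq -\frac{1}{2}\theta^{2}$. Combined with $\theta^+[\Sigma]\leq -c<0$ (which is uniform because $\Sigma$ is compact), every generator reaches a focal point at affine parameter at most $\lambda_{0}=2/c$; completeness ensures the null exponential map is defined on all of $\Sigma\times[0,\lambda_{0}]$. I would then identify the achronal set
\[
  E := \partial J^+(\overline{M^-})\setminus \overline{M^-}
\]
with (a subset of) the image of $\Sigma\times[0,\lambda_{0}]$ under this exponential map, hence show $E$ is compact. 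The point is that a future-directed null geodesic from $p\in\Sigma$ in the \emph{ingoing} direction $l^-$ immediately enters $I^+(\overline{M^-})$: displacing the base point slightly into $M^-$ along $-\nu$ produces, in the null frame, a future-directed timelike connection to any later point of the geodesic. Thus only the outgoing null congruence contributes to $E$, and the focusing argument makes it compact.

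Global hyperbolicity now provides a diffeomorphism $L\cong M\times\IR$ and a projection $\pi:L\to M$ along the integral curves of a timelike vector field. Achronality of $E$ forces $\pi|_E$ to be injective, so $\pi(E)\subset \overline{M^+}$ is a compact topological $3$-manifold with boundary $\Sigma$, embedded in the $3$-manifold $M$. By invariance of domain, the manifold interior $U:=\pi(E)\setminus\Sigma$ is open in $M$, hence open in $M^+$; and since the closure of $U$ in $M$ is $U\cup\Sigma=\pi(E)$, the set $U$ is also closed in $M^+$. Connectedness of $M^+$ then forces $U=M^+$, whence $\overline{M^+}=\pi(E)$ is compact, contradicting the hypothesis that $M^+$ is non-compact. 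The degenerate case $U=\emptyset$ is ruled out because the outgoing generators do leave $\Sigma$ before the first focal point.

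The main obstacle I anticipate is the second step: rigorously establishing that $\partial J^+(\overline{M^-})\setminus\overline{M^-}$ is generated precisely by the outgoing null geodesics from $\Sigma$, and that each such generator stays on the achronal boundary all the way to its first focal point (or to the first crossing with another generator). Once this structural identification is in place, Raychaudhuri focusing, invariance of domain, and the connectedness of $M^+$ combine cleanly to yield the contradiction.
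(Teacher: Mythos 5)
Your proof is correct and follows the same Penrose-type contradiction strategy as the paper: assume null geodesic completeness, focus the outgoing null congruence from $\Sigma$ via Raychaudhuri and the NEC, obtain a compact achronal set generated by the outgoing generators only, project it into the Cauchy surface along a timelike vector field, and contradict the non-compactness and connectedness of $M^+$. The one genuine variation is the choice of achronal set. The paper works with $\del J^+(\Sigma)$ and must first establish a splitting lemma, namely that $\del J^+(\Sigma)\setminus\Sigma = H^+\cup H^-$ decomposes disjointly into pieces generated by outgoing and ingoing null geodesics; this is proved by concatenating two null geodesics meeting at a putative common point, projecting the resulting curve to $M$, and using the separation property of $\Sigma$ together with achronality of $\del J^+(\Sigma)$ to reach a contradiction. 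You instead take $E = \del J^+(\overline{M^-})\setminus\overline{M^-}$, which absorbs the ingoing generators automatically: an ingoing null geodesic from $\Sigma$ immediately enters $I^+(\overline{M^-})$ by the small-displacement argument you give, so only the outgoing congruence contributes to $E$. This neatly sidesteps the splitting lemma, which is the bulk of the paper's proof. The final topological step also differs slightly: the paper argues that a boundary point of the projected set in the interior of $M^+$ would force $\nabla\tau$ to be tangent to the achronal hypersurface, while you use injectivity of the projection plus invariance of domain and an open-closed argument in $M^+$; both are sound. The structural concern you flag about $E$ being generated exactly by the $l^+$-geodesics from $\Sigma$ up to the first focal or crossing point is a standard property of achronal boundaries, and the paper relies on the same fact for $H^+$ without further justification. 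Note also that both your proof and the paper's implicitly require $\Sigma$ to be compact (so the Raychaudhuri bound on the affine length of generators is uniform), an assumption not stated in the theorem but clearly intended.
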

\begin{remark}
  Assume that $M$ and $\Sigma$ are as above, except that $\Sigma$ is a
  stable MOTS instead of having $\theta^+<0$. Then, if each component
  of $\Sigma$ is either strictly stable or satisfies the genericity
  assumption~\eqref{eq:generic} then $M$ and $\Sigma$ can locally
  be deformed in $L$ to a Cauchy surface $M'$ and a surface $\Sigma'$
  with $\theta^+(\Sigma) < 0$.
\end{remark}
\begin{proof}
  As usual we will assume that $L$ is geodesically complete and deduce
  a contradiction. We denote by $J^+(\Sigma)$ the future causal
  development of $\Sigma$ in $L$. Its boundary $\del J^+(\Sigma)$ is
  generated by null-geodesic segments with past endpoints on $\Sigma$
  and orthogonal to $\Sigma$. Denoting by $l^+$ and $l^-$ a choice of
  outgoing and ingoing null normal fields. Then the generators of
  $\del J^+(\Sigma)$ are tangent to either $l^+$ or $l^-$ where they
  meet $\Sigma$.

  Assume that $p\in \del J^+(\Sigma)$ can be connected to $\Sigma$ by
  a null geodesic $\gamma_1:[0,1]\to L$ such that $\gamma_1(0) \in
  \Sigma$, $\dot\gamma_1(0)=l^+$ and $\gamma_1(1)=p$. Then it can not
  happen that there is also a null geodesic $\gamma_2:[0,1]\to L$ such
  that $\gamma_2(0) \in \Sigma$, $\dot\gamma_2(0)=l^-$ and
  $\gamma_2(1)=p$. This can be seen as follows. First note that
  $\gamma_i(t)\in\del J^+(\Sigma)$ for all $t\in(0,1]$. One can
  define a continuous curve $\gamma:[0,2]\to \del J^+(\Sigma)$ such
  that
  \begin{equation*}
    \gamma(t) =
    \begin{cases}
      \gamma_1(t) & t\in[0,1]\\
      \gamma_2(2-t) & t\in[1,2].
    \end{cases}
  \end{equation*}
  Let $\tau$ be a time function on $L$. Then we can define the
  projection $\Phi:L\to M$ such that $\Phi(p) = q$ if and only if $p$
  lies on the integral curve of $\nabla \tau$ which meets $M$ at the
  point $q$. Since $L$ is globally hyperbolic, $\Phi$ is well defined
  on all of $L$. Define $\tilde\gamma:[0,2]\to M$ as $\tilde\gamma(t)
  = \Phi(\gamma(t))$ for all $t\in[0,2]$. This continuous curve starts
  and ends on $\Sigma$ and has the property that $\tilde\gamma(t)\in
  M^+$ for $t\in(0,\eps)$ and $\tilde\gamma(t)\in M^-$ for
  $t\in(2-\eps,2)$. By continuity, and since $\Sigma$ separates, there
  exists $t_0\in[\eps, 2-\eps]$ with $\tilde\gamma(t_0)\in\Sigma$. By
  definition, this means that $\Phi(\gamma(t_0))\in\Sigma$, but this
  is impossible, as it would imply that $\gamma(t_0)\in\del
  J^+(\Sigma)\cap I^+(\Sigma)$, where $I^+(\Sigma)$ denotes the
  chronological future of $\Sigma$.

  We thus infer that $\del J^+(\Sigma) \setminus \Sigma$ splits into
  two parts $\del J^+(\Sigma) = H^+ \cup H^-$ where $H^+$ is generated
  by the outgoing null-geodesic segments and $H^-$ is generated by
  ingoing null-geodesic segments. The standard convergence results for
  geodesic congruences imply that each null-geodesic leaves $H^+$
  after a finite value of the affine parameter. This implies that
  $H^+$ is a compact Lipschitz manifold with boundary $\Sigma$.

  By the above argument it is easy to see that $\Phi$ maps $H^+$ into
  $M^+$. Since $M^+$ is non-compact but $\Phi(H^+)$ is, we have that
  $M^+\neq\Phi(H^+)$. Then $\Phi(H^+)$ must have a boundary besides
  $\Sigma$ in $M^+$ as $M^+$ is connected. This is not possible as
  this would imply that $\nabla\tau$ is tangent to $H^+$
  somewhere. This yields the desired contradiction.
\end{proof}


%
\section*{Acknowledgments}
The authors would like to thank Jeffrey Winicour and Bela Szilagyi for
interesting discussions, in particular in relation with
section~\ref{sec:coalescence-mots}.

LA, MM and JM thank the Mittag-Leffler-Institute, Djursholm, Sweden
for hospitality and support. LA is supported in part by the NSF, under
contract no.\ DMS 0407732 and DMS 0707306 with the University of
Miami. MM and WS were supported by the Spanish MEC project
FIS2006-05319. MM was also supported by the project P06-FQM-01951 of
the J. de Analuc\'{\i}a. We also thank the anonymous referees for
their helpful comments.


%
\bibliographystyle{amsalpha}
\bibliography{../extern/references}
\end{document}